\documentclass[aps,prd,reprint,superscriptaddress,amsmath,amssymb,longbibliography,raggedbottom]{revtex4-2}

\usepackage{mathtools}
\usepackage{mathrsfs}
\usepackage{bm}
\usepackage{booktabs}
\usepackage[colorlinks=true,linkcolor=blue,citecolor=blue,urlcolor=blue,bookmarksdepth=2]{hyperref}
\hypersetup{
 pdftitle={Exact thermodynamics and two-horizon relations of four rotating Einstein-Maxwell-Maxwell-dilaton embeddings},
 pdfauthor={Hai Huang, Juhua Chen},
 pdfsubject={Exact thermodynamics of special rotating sectors in four-dimensional Einstein-Maxwell-Maxwell-dilaton theory},
 pdfkeywords={rotating black holes, Einstein-Maxwell-Maxwell-dilaton gravity,
 Kaluza-Klein black holes, horizon identities, black-hole thermodynamics}
}

\newcommand{\dd}{\mathrm d}
\newcommand{\e}{\mathrm e}
\newcommand{\sgn}{\operatorname{sgn}}

\newtheorem{proposition}{Proposition}

\begin{document}

\title{Exact thermodynamics and two-horizon relations of four rotating
Einstein--Maxwell--Maxwell--dilaton embeddings}

\author{Hai Huang}
\email{haihuang@gyu.edu.cn}
\affiliation{College of Mechanical Engineering, Guiyang University,
103 Jianlong Rd., Guiyang, Guizhou 550005, China}

\author{Juhua Chen}
\email{jhchen@hunnu.edu.cn}
\affiliation{Department of Physics, Hunan Normal University,
36 Lushan Rd., Changsha, Hunan 410081, China}

\begin{abstract}
We establish complete outer-horizon thermodynamics for four exact
rotating embeddings of Kerr--Newman and Kaluza--Klein seeds in
four-dimensional, asymptotically flat Einstein--Maxwell--Maxwell--dilaton
theory without a scalar potential. Constant-coefficient field maps
determine the allowed couplings and charge variations. We obtain
single-valued mass functions and verify the first law, Smarr relation,
and scalar-shift identity on the stated regular branches and
finite-area extremal boundaries, including magnetic patch contributions
where required. For two distinct regular horizons with positive finite areas,
the horizon combinations $2T_\pm S_\pm$ and
$S_\pm\Omega_\pm/\pi$ satisfy a common quadratic relation with
the long-range-force parameter determined by the asymptotic mass,
electromagnetic charges, and scalar charge.
\end{abstract}

\maketitle

\section{Introduction}
\label{sec:intro}

Black-hole thermodynamics provides a powerful description of gravitational
systems through horizon quantities and asymptotic conserved charges.
The first law, Smarr relation, and entropy-area relation establish a
connection between horizon geometry and global properties of black holes
\cite{Bekenstein1973,Smarr1973,BardeenCarterHawking1973,Hawking1975,
Wald1993,IyerWald1994,
GibbonsHawking1977,SudarskyWald1992Extrema,
JacobsonKangMyers1994Entropy,Wald2001Thermodynamics,XiongLu2026Interior}. For spherically symmetric dynamical systems,
Hayward's unified first law relates energy change to work and energy
flux \cite{Hayward1998}. When scalar fields and multiple gauge fields are present, however, the
thermodynamic description becomes more involved. Scalar boundary values
can introduce additional work terms in the first law, while the
coupling between scalar hair, gauge charges, and horizon quantities
depends on the allowed field configurations
\cite{GibbonsKalloshKol1996,Astefanesei2018ScalarFirstLaw,Pacilio2018,
LiuLu2014ScalarCharges,LuPangPope2013DyonicAdS,
CardenasFuentealbaMatulich2016,AstefaneseiGoldsteinMahapatra2008,
Li2021ScalarHairThermodynamics,BallesterosGomezFayrenOrtinZatti2023,BechmannLechtenfeld1995,AnabalonOliva2012,
HerdeiroRadu2014ScalarHair,HerdeiroRadu2015Construction,HerdeiroRadu2015}.

For rotating dilaton black holes, explicit solutions are available only
in special coupling sectors \cite{HorneHorowitz1992,KunzMaisonNavarroViebahn2006}, while generic rotating solutions with
multiple gauge fields and nontrivial scalar profiles remain difficult to
construct. This motivates the question of whether thermodynamic
quantities and relations can be obtained without first constructing the
full metric and matter fields. Recent works have shown that thermodynamic quantities and relations
can be reconstructed without first constructing explicit black hole
solutions, first for EMMD black holes and subsequently for more general
EMD-like theories
\cite{YangLuLu2026PRL,WuLu2026GeneralEMD}. The method combines the first law, scaling
relations, connections between asymptotic and horizon quantities, and a
secondary (non-independent) scalar-charge relation supported by regular
static solutions
\cite{LuYang2013Toda,LuYangLu2025MassCharge}. Exact one-field and Toda
solutions provide independent checks of this framework
\cite{GibbonsMaeda1988,GarfinkleHorowitzStrominger1991,
GibbonsWiltshire1986,LuYang2013Toda,LuWuWuZhao2026Toda}.

This development motivates extending solution-independent
thermodynamic methods to rotating systems, where explicit metric
constructions are considerably more difficult. For rotating EMMD systems,
an important question is how such thermodynamic structures extend to
theories with multiple Maxwell fields through field maps, constrained
charge spaces, and relations between asymptotic charges and horizon
geometry.

Rotation introduces an additional dimensionless parameter that is not
fixed by the static thermodynamic relations. At fixed couplings and
scalar boundary value, asymptotically flat scaling gives
\begin{equation*}
 M(\lambda^2S,\lambda^2J,\lambda Q_i,\lambda P_i;\phi_0)
 =\lambda M(S,J,Q_i,P_i;\phi_0).
\end{equation*}
Both $S$ and $J$ scale as length squared, so $J/S$ remains invariant
under this scaling. The static mass relation determines only the
$J=0$ sector, and invariance under spin reversal implies that the mass
is an even function of $J$, but it does not determine the dependence
on the independent ratio $J/S$. Therefore, additional rotational
information is required. The horizon area, surface gravity, and
electromagnetic potentials all acquire spin dependence.

To determine this rotational information, we use known exact
rotating one-field solutions as seeds and embed them into the
two-Maxwell EMMD theory. The relevant seeds are Kerr--Newman
\cite{NewmanEtAl1965,Carter1968,ChristodoulouRuffini1971}
and the electric and dyonic Kaluza--Klein black holes at $|a|=\sqrt3$
\cite{Rasheed1995,Larsen2000}. These seeds provide exact rotating
geometries with known thermodynamic properties and allow the separation
of geometric information from the Maxwell charge realization.
A useful embedding must answer two questions. After one Maxwell field
is replaced by two fields, do the metric and scalar still solve the
EMMD equations? How are the seed charges distributed between the two
target fields? These questions determine the allowed couplings,
independent charge variations, and electromagnetic work. Magnetic
charges also require regular north and south gauge patches.

Several ingredients are already known for the seeds. Chow and Comp\`ere
\cite{ChowCompere2014} give the quadratic relation between asymptotic
charges and the outer-horizon temperature--entropy and rotational
products, together with its inner-horizon expression. Cveti\v c et al.
\cite{CveticEtAl2018} give the temperature--entropy sign relation and
the branch-dependent sign of the rotational product; they also give the
single-charge rotating KK mass formula and its auxiliary cubic equation.
Related charged rotating solutions and near-horizon structures are
known in higher-dimensional string compactifications
\cite{CveticYoum1996FiveDimensional,CveticLarsen1998}.
Here we use the four-dimensional seed thermodynamics to determine
its realization in the two-field EMMD theory.

The first result concerns four such realizations, listed in
Table~\ref{tab:sector-atlas}: equal-coupling Kaluza--Klein, zero-coupling
Kerr--Newman, balanced Kerr--Newman, and Rasheed--Larsen direct--dual
embeddings. The classification concerns real constant-coefficient
combinations of a seed field and its scalar-weighted dual that preserve
the seed metric and the full scalar profile. The resulting field maps, charge spaces,
and work forms give complete outer-horizon thermodynamics on the allowed
states. We establish single-valued mass functions, identify their physical
roots and regular boundaries, and verify the first law, Smarr relation,
and scalar-shift identity for every family. These EMMD embedding and
state-space results build on the known geometries and seed thermodynamics.
Independent conserved charges can share a common Maxwell profile;
charge-space dimension and field-strength rank therefore describe
different freedoms.

For these families, ``all-spin'' means the full allowed range of the
single angular momentum on the specified connected outer-horizon branch.
``Global'' means the complete regular outer-horizon thermodynamic domain
inherited from each seed, including the stated finite-area extremal
boundaries. 

The second result is a common outer- and inner-horizon relation on the
subsets with two distinct regular horizons of positive finite area.
We extend the long-range-force parameter introduced in
Ref.~\cite{YangLuLu2026PRL} to arbitrary scalar boundary values and to
configurations carrying both electric and magnetic charges, thereby
obtaining the scalar-shift-covariant quantity $\mu_{\rm LR}$.
\begin{equation}
 \mu_{\rm LR}^{2}
 =
 (2TS)^2+\left(\frac{S\Omega_H}{\pi}\right)^2 .
\label{eq:intro-unified-rotating-identity}
\end{equation}
For each of the four asymptotically flat families, we evaluate
$\mu_{\rm LR}$ from the asymptotic mass, scalar charge, and
electromagnetic charges in the EMMD normalization.
Independently, we compute the horizon entropies, signed temperatures,
and angular velocities from the geometry.
Comparing these asymptotic and horizon quantities, we find that both
horizons satisfy the same relation.

The paper is organized as follows. Section~\ref{sec:framework} gives the theory, charges, and thermodynamic
conventions. Section~\ref{sec:seed-classification} classifies the seed
embeddings. Sections~\ref{sec:o2-kk}--\ref{sec:hodge} develop the four
families and their complete outer-horizon thermodynamics.
Section~\ref{sec:two-horizon} states the common horizon relations, checks
them for each family, and explains their domains, limits, and physical
meaning. Section~\ref{sec:conclusions} concludes. The appendices give the
gauge-patch and horizon-regularity calculations and the charge-space and
global mass-inversion proofs.

We use signature $(-+++)$, orientation
$\dd t\wedge\dd r\wedge\dd\theta\wedge\dd\varphi>0$, and $G_4=1$.

\begin{table*}[t]
 \caption{
 Couplings, field maps, seed solutions, and outer-horizon domains
 for the four EMMD embeddings.
 At fixed $\phi_0$, $d_Q$ counts the independent charge parameters,
 and $r_F$ is the generic number of linearly independent
 field-strength profiles in the electromagnetic frame used for
 each row.
 The rows distinguish EMMD embeddings, not necessarily different
 spacetime geometries.
 }
 \label{tab:sector-atlas}

 \centering\small
 \renewcommand{\arraystretch}{1.15}
 \setlength{\tabcolsep}{2.5pt}
 \newcommand{\atlascell}[2]{%
   \parbox[t]{#1}{\raggedright #2\par}%
 }

 \begin{tabular}{@{}llllll@{}}
 \toprule

 \atlascell{0.145\textwidth}{Family}
 &
 \atlascell{0.120\textwidth}{$(a_1,a_2)$}
 &
 \atlascell{0.215\textwidth}{Field map; scalar}
 &
 \atlascell{0.150\textwidth}{Known seed}
 &
 \atlascell{0.085\textwidth}{$(d_Q,r_F)$}
 &
 \atlascell{0.220\textwidth}{Outer-horizon domain}
 \\

 \midrule

 \atlascell{0.145\textwidth}{Equal-coupling  Kaluza–Klein}
 &
 \atlascell{0.120\textwidth}{$(-\sqrt3,-\sqrt3)$}
 &
 \atlascell{0.215\textwidth}{
   Internal split of one field;
   nonconstant scalar on the charged branch
 }
 &
 \atlascell{0.150\textwidth}{
   Boosted-Kerr KK seed
   \cite{Rasheed1995,Larsen2000}
 }
 &
 \atlascell{0.085\textwidth}{$(2,1)$}
 &
 \atlascell{0.220\textwidth}{
   Nonextremal outer branch and finite-area extremal limits
 }
 \\

 \addlinespace[3pt]

 \atlascell{0.145\textwidth}{Zero-coupling Kerr–Newman}
 &
 \atlascell{0.120\textwidth}{$(0,0)$}
 &
 \atlascell{0.215\textwidth}{
   Internal split of one field;
   constant scalar
 }
 &
 \atlascell{0.150\textwidth}{
   Kerr--Newman
   \cite{ChristodoulouRuffini1971}
 }
 &
 \atlascell{0.085\textwidth}{$(2,1)$}
 &
 \atlascell{0.220\textwidth}{
   Purely electric outer branch and finite-area extremal limits
 }
 \\

 \addlinespace[3pt]

 \atlascell{0.145\textwidth}{Balanced Kerr–Newman}
 &
 \atlascell{0.120\textwidth}{Any $a_1>0>a_2$}
 &
 \atlascell{0.215\textwidth}{
   Parallel fields with cancelling scalar sources;
   constant scalar
 }
 &
 \atlascell{0.150\textwidth}{
   Dyonic Kerr--Newman seed
   \cite{LuYangLu2025MassCharge}
 }
 &
 \atlascell{0.085\textwidth}{$(2,1)$}
 &
 \atlascell{0.220\textwidth}{
   Balanced outer branch and finite-area extremal limits,
   including static extremal RN
 }
 \\

 \addlinespace[3pt]

 \atlascell{0.145\textwidth}{Rasheed--Larsen direct--dual}
 &
 \atlascell{0.120\textwidth}{$(\sqrt3,-\sqrt3)$}
 &
 \atlascell{0.215\textwidth}{
   Seed field and its scalar-weighted dual;
   generally nonconstant scalar
 }
 &
 \atlascell{0.150\textwidth}{
   Dyonic Rasheed--Larsen seed
   \cite{Rasheed1995,Larsen2000}
 }
 &
 \atlascell{0.085\textwidth}{$(3,2)$}
 &
 \atlascell{0.220\textwidth}{
   Nonextremal outer branch and the fast and slow
   finite-area extremal limits
 }
 \\

 \bottomrule
 \end{tabular}
\end{table*}

\section{Theory and thermodynamic conventions}
\label{sec:framework}

\subsection{Action, field equations}

We consider four-dimensional Einstein--Maxwell--Maxwell--dilaton gravity
with two Maxwell fields and a dilaton $\phi$. Throughout this paper the
scalar potential vanishes, and the action is

\begin{equation}
 \begin{aligned}
 I={}&\frac1{16\pi}\int \dd^4x\sqrt{-g}\left[
 R-\frac12(\partial\phi)^2\right.\\
 &\left.\hspace{8mm}-\frac14\sum_{i=1}^2\e^{a_i\phi}F_i^2\right],\\
 F_i={}&\dd A_i.
 \end{aligned}
 \label{eq:general-action}
\end{equation}

The dimensionless couplings $a_i$ are fixed within each theory, and
$F_i=\dd A_i$ holds in each gauge patch. The field equations are
\begin{equation}
\begin{aligned}
 \dd F_i={}&0,\quad \dd(\e^{a_i\phi}\star F_i)=0,\quad
 \Box\phi={}\frac14\sum_i a_i\e^{a_i\phi}F_i^2,\\
 R_{\mu\nu}={}&\frac12\partial_\mu\phi\partial_\nu\phi 
 +\frac12\sum_i\e^{a_i\phi}\left(
 F_{i\mu\rho}F_{i\nu}{}^\rho-\frac14g_{\mu\nu}F_i^2\right).
\end{aligned}
 \label{eq:general-scalar}
\end{equation}
The absence of a scalar potential leaves the dilaton couplings to the
Maxwell fields intact, so the scalar can remain nonconstant.
We impose asymptotically flat boundary conditions and consider stationary,
axisymmetric black holes with one connected regular outer horizon.
All matter fields inherit the Killing symmetries up to gauge transformations.
The time translation is normalized at infinity and the azimuthal period
is $2\pi$. These choices fix the temperature and angular velocity used
in the thermodynamic and two-horizon checks.

\subsection{Conserved charges and conjugate potentials}

We normalize the asymptotic scalar and electromagnetic charges as
\begin{equation}
 \begin{aligned}
 \phi&=\phi_0+\frac{2\Sigma}{r}+O(r^{-2}),\\
 P_i&=\frac1{16\pi}\int_{S^2_\infty}F_i,\qquad
 Q_i=\frac1{16\pi}\int_{S^2_\infty}\e^{a_i\phi}\star F_i.
 \end{aligned}
 \label{eq:general-charges}
\end{equation}
Let $\chi=\partial_t+\Omega_H\partial_\varphi$. In the absence of magnetic monopole charge, a gauge potential that vanishes at infinity gives $\Phi_i=(\chi\mathbin{\cdot}A_i)|_\infty-(\chi\mathbin{\cdot}A_i)|_H$. When magnetic charge is present, regular potentials must be defined on overlapping north and south patches. Their horizon contractions can differ by gauge terms. We use the electric and magnetic conjugates of the first law defined by covariant phase space, with a consistent asymptotic reference and the required gauge and patch-boundary contributions \cite{Prabhu2017FirstLaw,ElgoodMeessenOrtin2020FirstLaw,OrtinPereniguez2022Magnetic}. The dual potentials satisfy
\begin{equation}
 \dd\widetilde A_i=-\e^{a_i\phi}\star F_i.
 \label{eq:dual-potentials}
\end{equation}
Appendix~\ref{app:gauge-patches} gives the regular patch potentials and their horizon limit.

Define the long-range-force parameter by
\begin{equation}
 \mu_{\rm LR}^2:=M^2+\Sigma^2
 -4\sum_i\left(\e^{-a_i\phi_0}Q_i^2+\e^{a_i\phi_0}P_i^2\right).
 \label{eq:mu-LR-general}
\end{equation}
This extends the static parameter $\mu$ of Ref.~\cite{YangLuLu2026PRL} to arbitrary scalar boundary values and to both electric and magnetic charges. The exponential factors make the definition invariant under the constant scalar shift. We take $\mu_{\rm LR}$ to be the nonnegative square root when the right-hand side is nonnegative, as it is for the three nonzero-coupling families and the zero-coupling control below.

\subsection{First law and variations on constrained state spaces}
\label{subsec:intrinsic-derivatives}

On a connected asymptotically flat branch with a differentiable mass function, the extended first law and four-dimensional Smarr relation are
\begin{align}
 &\dd M={}T\dd S+\Omega_H\dd J
 +\sum_i\left(\Phi_i\dd Q_i+\Psi_i\dd P_i\right)
 -\frac12\Sigma\dd\phi_0, \notag\\
 &M={}2TS+2\Omega_HJ
 +\sum_i\left(\Phi_iQ_i+\Psi_iP_i\right).
 \label{eq:general-Smarr}
\end{align}
Under the constant shift $\phi\mapsto\phi+c$, $A_i\mapsto\e^{-a_ic/2}A_i$, the electric charges scale with weight $a_i/2$ and the magnetic charges with weight $-a_i/2$; their potentials have the opposite weights. The metric, mass, entropy, and angular momentum stay the same. Substituting the changes in charges and scalar boundary value into the first law gives the scalar-shift Ward identity
\begin{equation}
 \Sigma=\sum_i a_i\left(Q_i\Phi_i-P_i\Psi_i\right).
 \label{eq:general-Ward}
\end{equation}
Setting $P_i=0$ gives its purely electric form.

The charges and scalar boundary value of an embedded family may be constrained and therefore cannot always be varied independently. Although the mass is written as \(M(S,J,Q_i,P_i;\phi_0)\), this notation does not imply that all these variables can be varied independently within a particular solution family. The first law must be verified for variations allowed by that family. Different extensions of the mass can agree throughout the family while giving different partial derivatives in directions that leave it. The first law therefore determines the total work associated with allowed variations, whereas physical responses in these additional directions require a larger solution family that admits the corresponding variations.

Although the constrained first law fixes only the total work term for allowed variations, the individual electric and magnetic potentials can be determined from the actual gauge fields, the chosen reference conventions at infinity, and the magnetic gauge-patch contributions. These additional inputs fix the specific potentials used here. Rather than simply taking partial derivatives of a formal mass expression, we first determine the potentials from the fields and their gauge patches, and then verify that the resulting work terms reproduce the first law for the allowed charge variations.

\section{Classification of constant-coefficient seed embeddings}
\label{sec:seed-classification}

We classify a specific way of embedding one-field solutions into
EMMD theory: we keep the seed metric and scalar field unchanged
and construct the two Maxwell fields from the seed field and its
scalar-weighted dual, using real coefficients that are constant
throughout spacetime. Both the seed and target theories have
zero scalar potential.

For constant-scalar seeds, the completeness result assumes that
$F^2$ and $F\cdot\star F$ are linearly independent as functions
of position. This means that no nontrivial combination of these
two functions with constant real coefficients vanishes everywhere.
Under this assumption, the coefficients of the two invariants
in the scalar equation must vanish separately. The resulting
source-cancelling maps also work when the invariants are dependent
or vanish, but such special seeds may admit additional maps.

Here a constant scalar means $\phi(x)\equiv\phi_0$ throughout
the spacetime region under consideration. A vanishing coefficient
of the asymptotic $1/r$ term alone is not sufficient.
The results below therefore identify all field maps of the stated
form under their respective assumptions, rather than all rotating
EMMD solutions.

\subsection{Nonconstant-scalar seeds}
\label{subsec:nonconstant-scalar-embeddings}

We first ask which constant combinations of a seed Maxwell field and its scalar-weighted dual preserve the seed metric and dilaton while satisfying the two-field equations.

\begin{proposition}[Constant-coefficient maps]
\label{prop:constant-coefficient-seed-dual}
Let $a,a_i\in\mathbb R$, with $a\ne0$, and let $(g,\phi,H)$ solve the one-field version of Eq.~\eqref{eq:general-action} with coupling $a$ on a connected open set $U$. Assume that $H$ is nonzero at every point and that $\phi(U)$ contains an interval of nonzero length. In a fixed electric frame, take real constant coefficients $u_i,v_i$ and set
\begin{equation}
 F_i=u_iH+v_i\e^{a\phi}\star H,
 \qquad i=1,2.
 \label{eq:general-seed-dual-ansatz}
\end{equation}
Then $(g,\phi,F_1,F_2)$ satisfies the two-field equations written in terms of field strengths on $U$ if and only if
\begin{equation}
 u_i(a_i-a)=0,\quad v_i(a_i+a)=0,\quad
 \sum_i(u_i^2+v_i^2)=1.
 \label{eq:general-seed-dual-classification}
\end{equation}
If $(u_i,v_i)=(0,0)$, the corresponding field vanishes and its coupling is unrestricted.
\end{proposition}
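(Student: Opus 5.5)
We substitute the ansatz \eqref{eq:general-seed-dual-ansatz} into each of the two-field equations \eqref{eq:general-scalar} in turn. We then use the seed equations $\dd H=0$, $\dd(\e^{a\phi}\star H)=0$, $\Box\phi=\tfrac a4\e^{a\phi}H^2$ and the seed Einstein equation to reduce each condition to algebraic constraints on $(u_i,v_i)$. Throughout we use the four-dimensional Lorentzian identities $\star\star=-1$ on two-forms, $(\star H)^2=-H^2$, and the fact that the stress tensor $T_{\mu\nu}(H)=H_{\mu\rho}H_\nu{}^\rho-\tfrac14 g_{\mu\nu}H^2$ is invariant under duality rotations. In particular $T(\star H)=T(H)$, and the cross term built from $H$ and $\star H$ vanishes.

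\emph{Bianchi and Maxwell equations.} $\dd F_i=u_i\,\dd H+v_i\,\dd(\e^{a\phi}\star H)=0$ holds automatically. For the Maxwell equation, $\e^{a_i\phi}\star F_i=u_i\e^{a_i\phi}\star H-v_i\e^{(a_i+a)\phi}H$, so
\begin{equation*}
\dd(\e^{a_i\phi}\star F_i)=u_i(a_i-a)\,\e^{(a_i-a)\phi}\dd\phi\wedge\e^{a\phi}\star H-v_i(a_i+a)\,\e^{(a_i+a)\phi}\dd\phi\wedge H .
\end{equation*}
Necessity of $u_i(a_i-a)=0$ and $v_i(a_i+a)=0$ from this equation alone is delicate, because $\dd\phi\wedge H$ and $\dd\phi\wedge\star H$ could be dependent or could vanish. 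I therefore plan to extract these conditions from the scalar equation instead, and use the Maxwell equation only for sufficiency.

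\emph{Scalar equation.} Write $e_i:=\e^{(a_i-a)\phi}$ and $f_i:=\e^{(a_i+a)\phi}$. Compute
\begin{equation*}
\e^{a_i\phi}F_i^2=\e^{a\phi}H^2\bigl(u_i^2e_i-v_i^2f_i\bigr)+2u_iv_i\,\e^{(a_i+a)\phi}\,H\!\cdot\!\star H .
\end{equation*}
The equation $\tfrac a4\e^{a\phi}H^2=\tfrac14\sum_ia_i\e^{a_i\phi}F_i^2$ is then a pointwise relation between two invariants, and it must be combined with the Einstein equation. The trace-free part of the Einstein equation requires
\begin{equation*}
\sum_i\e^{a_i\phi}\,T(F_i)=\e^{a\phi}\,T(H),\qquad\text{i.e.}\qquad \sum_i\bigl(u_i^2e_i+v_i^2f_i\bigr)=1
\end{equation*}
at every point, because the cross terms drop out by duality invariance and $T(H)\neq0$ whenever $H\neq0$; here I expect to invoke that a nonzero two-form in four dimensions has nonzero stress tensor. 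Since $\phi$ ranges over an interval, this is an identity among exponentials in $\phi$. Linear independence of the functions $\e^{c\phi}$ for distinct $c$ on an interval forces every coefficient with exponent $\neq0$ to vanish. The exponents $a_i-a$ and $a_i+a$ cannot both vanish since $a\neq0$. Therefore $u_i\neq0$ requires $a_i=a$, $v_i\neq0$ requires $a_i=-a$, and the constant terms sum to $\sum(u_i^2+v_i^2)=1$. This gives the necessity of \eqref{eq:general-seed-dual-classification}.

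\emph{Sufficiency.} Given \eqref{eq:general-seed-dual-classification}, no index has $u_iv_i\neq0$, so the cross term vanishes and $e_i=1$ or $f_i=1$ on each nonzero coefficient. The Maxwell equations then hold by the displayed formula, and the Einstein equation holds by the normalization. The scalar source becomes $\sum_i a_i(u_i^2-v_i^2)\e^{a\phi}H^2/4=\sum_i a(u_i^2+v_i^2)\e^{a\phi}H^2/4$, using $a_i=a$ when $u_i\neq0$ and $a_i=-a$ when $v_i\neq0$; this matches the seed. When $(u_i,v_i)=(0,0)$, $F_i=0$ and $a_i$ drops out of every equation.

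\emph{Main obstacle.} The step I expect to be hardest is the necessity argument. It relies on the claim that $T(H)$ vanishes at no point of $U$, together with the exponential-independence argument; if that fails one would need to use null-field cases more carefully. I would first try to establish that claim through the standard identity $T_{\mu\rho}T^{\rho}{}_{\nu}\propto g_{\mu\nu}\bigl[(H^2)^2+(H\cdot\star H)^2\bigr]$. That identity leaves null $H$ as the remaining case, which I would treat separately by noting that its stress tensor $\ell_\mu\ell_\nu$ is still nonzero.
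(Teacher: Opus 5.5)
Your proof is correct and follows essentially the paper's route. Necessity comes from matching the Einstein sources to get $C(\phi)=\sum_i u_i^2\e^{(a_i-a)\phi}+\sum_i v_i^2\e^{(a_i+a)\phi}=1$ on the interval $\phi(U)$. Sufficiency comes from the Maxwell formula and the scalar-source match, and the scalar equation is not actually needed for necessity, despite your initial plan.

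The remaining differences are minor. The paper removes the nonconstant exponentials by noting that $C''(\phi)$ is a sum of nonnegative terms. This automatically covers coinciding exponents such as $a_1-a=a_2+a$, which your linear-independence argument must handle by noting that the grouped coefficients are sums of squares. The paper also gets $\mathcal T[H]\neq0$ directly from the positive energy density $|\mathbf E|^2+|\mathbf B|^2$, instead of using the quadratic identity and a separate null case.
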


\noindent\textit{Proof.}
For a two-form $K$, define its Maxwell stress tensor by
\[
 {\cal T}_{\mu\nu}[K]
 =K_{\mu\rho}K_\nu{}^\rho
 -\frac14g_{\mu\nu}K_{\rho\sigma}K^{\rho\sigma}.
\]
In four-dimensional Lorentzian signature, $\star^2=-1$ on real two-forms, and
\[
 {\cal T}[\alpha H+\beta\star H]
 =(\alpha^2+\beta^2){\cal T}[H]
\]
for real $\alpha$ and $\beta$. The target Maxwell contribution to the Einstein equation is therefore
\begin{equation}
\begin{aligned}
 &\sum_i\e^{a_i\phi}{\cal T}[F_i]
 =\e^{a\phi}C(\phi){\cal T}[H], \\
 &C(\phi):=\sum_i u_i^2\e^{(a_i-a)\phi}
 +\sum_i v_i^2\e^{(a_i+a)\phi}.
 \label{eq:general-seed-dual-source}
\end{aligned}
\end{equation}
A nonzero real Maxwell field has positive energy density, proportional to $|\mathbf E|^2+|\mathbf B|^2$ in a local orthonormal frame. Its stress tensor is therefore nonzero. Matching the seed and target Einstein equations requires $C(\phi)=1$ throughout $\phi(U)$. Differentiating twice gives
\begin{equation}
\begin{aligned}
 0=C''(\phi)
 ={}&\sum_i u_i^2(a_i-a)^2\e^{(a_i-a)\phi}\\
 +&\sum_i v_i^2(a_i+a)^2\e^{(a_i+a)\phi}.
 \label{eq:general-seed-dual-convexity}
\end{aligned}
\end{equation}
Each term is nonnegative. Hence $u_i(a_i-a)=0$ and $v_i(a_i+a)=0$. With these conditions, $C(\phi)=1$ becomes the normalization condition in Eq.~\eqref{eq:general-seed-dual-classification}, proving necessity.

Conversely, the seed equations $\dd H=0=\dd(\e^{a\phi}\star H)$ imply $\dd F_i=0$, and
\begin{equation}
\begin{aligned}
 \dd(\e^{a_i\phi}\star F_i)
 ={}&u_i(a_i-a)\e^{a_i\phi}
       \dd\phi\wedge\star H\\
 -&v_i(a_i+a)\e^{(a_i+a)\phi}
       \dd\phi\wedge H,
 \label{eq:general-seed-dual-Maxwell}
\end{aligned}
\end{equation}
vanishes by Eq.~\eqref{eq:general-seed-dual-classification}. The Einstein sources agree by Eq.~\eqref{eq:general-seed-dual-source}. Since $a\ne0$, the coupling conditions also give $u_iv_i=0$ for each field. Using $(\star H)^2=-H^2$, the target scalar source is
\begin{equation}
 \sum_i a_i\e^{a_i\phi}F_i^2
 =a\e^{a\phi}\left[\sum_i(u_i^2+v_i^2)\right]H^2
 =a\e^{a\phi}H^2,
 \label{eq:general-seed-dual-scalar-match}
\end{equation}
which equals the seed scalar source. This proves sufficiency.

When both target fields are nonzero,
Eq.~\eqref{eq:general-seed-dual-classification} requires
\begin{equation}
 (a_1,a_2)\in
 \{(a,a),\,(-a,-a),\,(a,-a),\,(-a,a)\}.
 \label{eq:allowed-coupling-representatives}
\end{equation}
For $(a_1,a_2)=(a,a)$, both target fields are constant
multiples of the seed field $H$. For
$(a_1,a_2)=(-a,-a)$, both are constant multiples of its
scalar-weighted dual $e^{a\phi}\star H$. For the two
opposite-sign coupling pairs, one target field is
proportional to $H$ and the other to $e^{a\phi}\star H$,
with their roles exchanged between the two cases.

When the scalar is nonconstant and both target fields are nonzero, this method requires the two coupling constants to have equal magnitudes. Obtaining solutions with a nonconstant scalar, two nonzero fields, and unequal coupling magnitudes therefore requires going beyond this construction. For each seed, we must separately check that the gauge potentials are consistently defined on overlapping patches and that the solution is regular in the specified region.

\subsection{Constant-scalar seeds}
\label{subsec:constant-scalar-embeddings}

We next ask how the independent electromagnetic invariants of a constant-scalar seed constrain the two-field embedding.

\begin{proposition}[Constant-scalar maps]
\label{prop:constant-scalar-seed-dual}
Let $(g,F)$ be a nonzero real source-free Einstein--Maxwell solution on a connected open set $U$. Take $a_1a_2\ne0$ and $\phi=\phi_0$ constant. Assume that $F^2$ and $F\mathbin{\cdot}\star F$ are linearly independent over $\mathbb R$ as functions on $U$. For real constant coefficients, set
\begin{equation}
 F_i=\e^{-a_i\phi_0/2}(u_iF+v_i\star F),
 \qquad z_i=u_i+\mathrm i v_i.
 \label{eq:constant-scalar-ansatz}
\end{equation}
The constant-scalar EMMD equations hold on $U$ if and only if
\begin{equation}
 \sum_{i=1}^2|z_i|^2=1,
 \qquad
 \sum_{i=1}^2a_i z_i^2=0.
 \label{eq:constant-scalar-classification}
\end{equation}
For $a_1a_2<0$, all solutions have the form
\begin{equation}
 z_i=\epsilon_i\sqrt{w_i}\,\e^{\mathrm i\vartheta_0},
 \quad
 (w_1,w_2)=\frac{1}{a_1-a_2}(-a_2,a_1).
 \label{eq:constant-scalar-opposite}
\end{equation}
For $a_1a_2>0$, they have the form
\begin{equation}
 \begin{aligned}
 z_1&=\epsilon_1\sqrt{\widetilde w_1}\,\e^{\mathrm i\vartheta_0}, \quad \epsilon_i=\pm1,\\
 z_2&=\mathrm i\epsilon_2\sqrt{\widetilde w_2}\,\e^{\mathrm i\vartheta_0}, \quad
 (\widetilde w_1,\widetilde w_2)=\frac{1}{a_1+a_2}(a_2,a_1).
 \end{aligned}
 \label{eq:constant-scalar-same}
\end{equation}
Thus a common duality rotation and independent field signs reduce the two cases to the parallel balanced form and the direct--Hodge form.
\end{proposition}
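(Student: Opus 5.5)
The plan is to reduce each group of field equations to algebraic conditions on the complex numbers $z_i$, mirroring the proof of Proposition~\ref{prop:constant-coefficient-seed-dual}.

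\emph{Bianchi identities and Maxwell equations.} With $\phi\equiv\phi_0$, the seed equations $\dd F=0=\dd\star F$ immediately give $\dd F_i=0$. They also give $\dd(\e^{a_i\phi_0}\star F_i)=\e^{a_i\phi_0/2}(u_i\,\dd\star F-v_i\,\dd F)=0$. These hold for arbitrary coefficients and impose nothing.

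\emph{Einstein equations.} Since $\partial\phi=0$, only the Maxwell stress tensors contribute. The identity ${\cal T}[\alpha F+\beta\star F]=(\alpha^2+\beta^2){\cal T}[F]$ gives
\begin{equation*}
 \sum_i\e^{a_i\phi_0}{\cal T}[F_i]=\Big(\sum_i|z_i|^2\Big){\cal T}[F].
\end{equation*}
The positive-energy argument of the previous proof shows ${\cal T}[F]\neq0$ wherever $F\neq0$. Since $F$ is nonzero somewhere, matching to the seed Einstein equation forces $\sum_i|z_i|^2=1$.

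\emph{Scalar equation.} Here $\Box\phi=0$, so the condition is $\sum_i a_i\e^{a_i\phi_0}F_i^2=0$. Using $(\star F)^2=-F^2$, this becomes
\begin{equation*}
 \sum_i a_i\left[(u_i^2-v_i^2)F^2+2u_iv_i\,F\cdot\star F\right]=0 .
\end{equation*}
This is the real and imaginary parts of $\big(\sum_i a_iz_i^2\big)$ paired with $(F^2,\,F\cdot\star F)$. By the assumed linear independence, both coefficients vanish, which gives $\sum_i a_iz_i^2=0$. Conversely, the two conditions in Eq.~\eqref{eq:constant-scalar-classification} make every equation hold, so the equivalence is established.

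\emph{Solving the algebraic system.} Write $z_i=r_i\e^{\mathrm i\theta_i}$ with $r_i\ge0$. The condition $a_1z_1^2=-a_2z_2^2$ gives $a_1^2r_1^4=a_2^2r_2^4$ in modulus.

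If $r_1=0$, then $r_2=0$ because $a_2\neq0$. This contradicts the normalization, so both $r_i>0$. Then $r_1^2/r_2^2=|a_2|/|a_1|$, and normalization gives $r_1^2=|a_2|/(|a_1|+|a_2|)$ and $r_2^2=|a_1|/(|a_1|+|a_2|)$.

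The phase condition is $\e^{2\mathrm i(\theta_2-\theta_1)}=-\sgn(a_1a_2)$. For $a_1a_2<0$ this gives $\theta_2-\theta_1\in\{0,\pi\}$. Then $|a_1|+|a_2|=a_1-a_2$ up to sign, reproducing $w_i$ in Eq.~\eqref{eq:constant-scalar-opposite} with $\vartheta_0=\theta_1$ and signs $\epsilon_i$. For $a_1a_2>0$ it gives $\theta_2-\theta_1=\pm\pi/2$, which yields Eq.~\eqref{eq:constant-scalar-same}. In both cases the signs give $w_i>0$.

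Finally, a common phase $\e^{\mathrm i\vartheta_0}$ corresponds to replacing $F$ by the duality rotation $\cos\vartheta_0F+\sin\vartheta_0\star F$, which is again a source-free Einstein--Maxwell solution. Removing it leaves the parallel balanced form or the direct--Hodge form.

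\emph{Main obstacle.} The only delicate step is the extraction of $\sum a_iz_i^2=0$ from the scalar equation. It depends on recognizing the complex-square structure and on the linear-independence hypothesis, which is exactly what prevents a single real combination of the invariants from vanishing accidentally. The degenerate case $r_i=0$ also needs care, and it is excluded precisely because $a_1a_2\neq0$.
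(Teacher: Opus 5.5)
Your proposal is correct and follows the paper's proof essentially step for step. Both arguments check the Bianchi and Maxwell equations directly, use $\mathcal T[uF+v\star F]=(u^2+v^2)\mathcal T[F]$ for the normalization, and use the invariant decomposition together with linear independence to obtain $\sum_i a_i z_i^2=0$. Both then solve $z_2^2=-(a_1/a_2)z_1^2$ under the normalization; your modulus--phase analysis, including the exclusion of $r_i=0$, simply spells out the step the paper states in one line.
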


\noindent\textit{Proof.}
The seed Maxwell equation and Bianchi identity give the corresponding target equations. The required two-form identities are
\begin{align}
 {\cal T}[uF+v\star F]&=(u^2+v^2){\cal T}[F],\notag\\
 (uF+v\star F)^2&=(u^2-v^2)F^2
 +2uvF\mathbin{\cdot}\star F.
 \label{eq:constant-scalar-source-algebra}
\end{align}
The seed stress tensor is nonzero wherever $F\ne0$, so Einstein-source matching gives the first condition in Eq.~\eqref{eq:constant-scalar-classification}. The scalar equation becomes
\begin{equation}
 \left[\sum_i a_i(u_i^2-v_i^2)\right]F^2
 +2\left[\sum_i a_i u_iv_i\right]
 F\mathbin{\cdot}\star F=0.
 \label{eq:constant-scalar-source}
\end{equation}
Independence of the two invariants requires both coefficients to vanish, which gives the second condition in Eq.~\eqref{eq:constant-scalar-classification}. Solving $z_2^2=-(a_1/a_2)z_1^2$ and imposing the normalization gives Eqs.~\eqref{eq:constant-scalar-opposite} and \eqref{eq:constant-scalar-same}. \hfill$\square$

The common angle \(\vartheta_0\) controls the mixing of the seed electric and magnetic charges. Changing this angle leaves the seed electromagnetic stress tensor unchanged, so the original metric and the transformed field still satisfy the Einstein–Maxwell equations. For some special seeds, the two electromagnetic invariants are proportional or vanish. This reduces the constraints imposed by the scalar equation and may allow additional field combinations. However, when both target fields are constant multiples of the same nonzero field \(F\), the weights for opposite-sign couplings remain uniquely determined as long as \(F^2\not\equiv0\).

\subsection{Charge freedom and the resulting families}
\label{subsec:embedding-family-overview}

\label{sec:embedding-diagnostics}
Let $\mathcal C_{\phi_0}$ be the allowed charge set at fixed scalar boundary value. On each smooth part of this set, define
\begin{equation}
 d_Q:=\dim\mathcal C_{\phi_0},
 \qquad
 r_F:=\dim_{\mathbb R}\operatorname{span}\{F_1,F_2\}.
 \label{eq:rank-definitions}
\end{equation}
The quantities \(d_Q\) and \(r_F\) count the independent charge parameters and the linearly independent field-strength profiles, respectively. The former describes how the charges can vary between black-hole states, while the latter indicates whether the two fields in a given state are simply constant multiples of one another. Thus, two independent charges can share a common field profile, while two linearly independent field profiles can both be constructed from the same seed.

Table~\ref{tab:sector-atlas} lists the four thermodynamic realizations. Different field maps can describe the same seed geometry. An internal split $F_i=n_iF$, $\sum_i n_i^2=1$, gives the equal-coupling Kaluza--Klein family and the zero-coupling Kerr--Newman example. Cancellation of the scalar source gives the balanced family for $a_1a_2<0$. At $a_1=-a_2=\sqrt3$, pairing a Rasheed--Larsen field with its dual gives a magnetic family with generic $r_F=2$ in the original electric frame. 

The equal-coupling and zero-coupling branches have two independent charges, although both fields are constant multiples of the same field profile. The balanced branch also has one common field profile, with two independent parameters specifying the electric and magnetic charge amplitudes. Restricting this branch to purely electric configurations leaves only one charge parameter. Applying an electromagnetic duality transformation to the second field of the balanced branch makes the two nonzero fields no longer proportional, but does not increase the number of independent charge parameters or produce a new seed geometry. The Rasheed–Larsen branch has three independent charge parameters. Its four electromagnetic charges satisfy one constraint, and the first law must be verified for variations that preserve this constraint. The same-sign constant-scalar representative in Sec.~\ref{cor:direct-hodge-constant-scalar} has $(d_Q,r_F)=(2,2)$ and is related to the balanced representative by dualizing the second Maxwell field, with the corresponding change of coupling and charges.

\section{Equal-coupling Kaluza--Klein thermodynamics}
\label{sec:o2-kk}

\subsection{Embedding and charge variations}

Take
\begin{equation}
 a_1=a_2=-\sqrt3.
 \label{eq:o2-couplings}
\end{equation}
The seed is the rotating Kaluza--Klein black hole obtained by reducing
five-dimensional vacuum gravity \cite{Rasheed1995,Larsen2000}. With our
conventions, the reduction
\begin{equation}
 \dd\widehat s_5^2=\e^{\phi/\sqrt3}\dd s_4^2
 +\e^{-2\phi/\sqrt3}(\dd z+{\cal A})^2,
 \label{eq:o2-KK-ansatz}
\end{equation}
maps a boosted Kerr string to a four-dimensional rotating electric solution
with field strength ${\cal F}=\dd{\cal A}$ and a varying dilaton.
For a constant unit vector $n=(n_1,n_2)=(\cos\psi,\sin\psi)$, define
\begin{equation}
 F_i=n_i{\cal F},\qquad n_1^2+n_2^2=1.
 \label{eq:o2-split}
\end{equation}
The two Maxwell equations are constant multiples of the seed equation, and
\begin{align}
 \sum_i\e^{-\sqrt3\phi}F_i^2
 &=\e^{-\sqrt3\phi}{\cal F}^2,\notag\\
 \sum_i\e^{-\sqrt3\phi}{\cal T}_{\mu\nu}[F_i]
 &=\e^{-\sqrt3\phi}{\cal T}_{\mu\nu}[{\cal F}],
 \label{eq:o2-source-check}
\end{align}
where ${\cal T}_{\mu\nu}[K]=K_{\mu\rho}K_\nu{}^\rho-g_{\mu\nu}K^2/4$.
This is the equal-coupling internal split of
Proposition~\ref{prop:constant-coefficient-seed-dual}, valid throughout the
seed spin range. The one-charge rotating Kaluza--Klein
Christodoulou--Ruffini relation and its cubic in $\cosh\delta$ are given in
Ref.~\cite{CveticEtAl2018}.
We express them in the two-charge normalization, include the internal
rotation, and determine the unique physical root over the full outer branch.

\subsection{Complete outer-horizon thermodynamic quantities}
\label{subsec:o2-outer-thermodynamics}

Let $\delta$ be the boost parameter, and define
\begin{equation}
 \begin{aligned}
 \ell&=\sqrt{m^2-b^2},& r_\pm&=m\pm\ell.
 \end{aligned}
 \label{eq:o2-parameters}
\end{equation}
The parameters \(m\) and \(b\) are the mass and rotation parameters of the original Kerr seed. A boost along the fifth dimension generates the electric charge, and \(\delta\) measures the strength of this boost. The resulting physical mass is \(M=m(1+\tfrac12\sinh^2\delta)\), so \(m\) generally differs from \(M\). The parameters \(m,\delta\) determine the total charge amplitude, while the unit vector \(n\) determines its distribution between the two fields.
The charged nonextremal outer branch has $m>0$, $|b|<m$, $\delta>0$,
and $n\in S^1$. Fixing the signs of nonzero $b,n_1,n_2$ selects a connected
rotating two-charge component. At $\phi_0=0$, the thermodynamic quantities are

\begin{equation}
\begin{aligned}
 M&=m\left(1+\frac{\sinh^2\delta}{2}\right),
 &J&=mb\cosh\delta,\\
 Q_i&=\frac12n_i m\sinh\delta\cosh\delta,
 &\Sigma&=-\frac{\sqrt3}{2}m\sinh^2\delta,\\
 P_i&=0,
 &S&=2\pi m\cosh\delta(m+\ell),\\
 \Phi_i&=n_i\tanh\delta,
&T&=\frac{\ell}{4\pi m\cosh\delta(m+\ell)},\\
\Omega_H&=\frac{b}{2m\cosh\delta(m+\ell)},
 &A&_H=4S,\quad \quad \kappa=2\pi T.
\end{aligned}
\label{eq:o2-area-kappa}
\end{equation}

At $\phi_0=0$, these formulas give
\begin{equation}
 \begin{aligned}
 \mu_{\rm LR}^2
 &=M^2+\Sigma^2-4(Q_1^2+Q_2^2)=m^2,\\
 2TS&=\ell=\sqrt{m^2-b^2}.
 \end{aligned}
\label{eq:o2-static-rotation-bridge}
\end{equation}
Applying the scalar shift restores an arbitrary boundary value,
\begin{equation}
 M^2+\Sigma^2
 -4\e^{\sqrt3\phi_0}(Q_1^2+Q_2^2)=m^2.
 \label{eq:o2-shift-covariant-mu}
\end{equation}

Together with $S\Omega_H/\pi=b$, this proves
Eq.~\eqref{eq:intro-unified-rotating-identity} at arbitrary $\phi_0$.
At $b=0$, $\mu_{\rm LR}=2TS$; at finite spin the difference of their squares
is $b^2$.

\subsection{First law, Smarr, and scalar shift}
\label{subsec:o2-direct-checks}

We first check the parameter variations at fixed $\phi_0=0$.
Differentiating the thermodynamic quantities, including their
dependence on the internal angle $\psi$, gives

\begin{align} 
 &M
 =2TS+2\Omega_HJ+\sum_i\Phi_iQ_i,
 &
 \Sigma
 &=-\sqrt3\sum_iQ_i\Phi_i.\notag\\
 &\dd M
 =T\dd S+\Omega_H\dd J+\sum_i\Phi_i\dd Q_i,
 \label{eq:o2-Smarr-Ward}
\end{align}

For an explicit check, use $\sum_i n_i\dd n_i=0$.
The geometric and electric contributions are
\begin{align}
  &\sum_i\Phi_i\dd Q_i
=\frac{\sinh^2\delta}{2}\dd m
   +\frac{m(2\cosh^2\delta-1)}{2}
    \tanh\delta\,\dd\delta,\notag\\
 &T\dd S+\Omega_H\dd J
 =\dd m+\frac{m}{2}\tanh\delta\,\dd\delta.
\end{align}
Their sum equals
$\dd[m(\cosh^2\delta+1)/2]=\dd M$;
all $\dd b$ and $\dd\psi$ terms cancel. Moreover,
\begin{equation}
 \begin{aligned}
 &2TS+2\Omega_HJ=m,\\
 &\sum_i\Phi_iQ_i
 =\frac{m\sinh^2\delta}{2}
 =-\frac{\Sigma}{\sqrt3},
 \end{aligned}
\end{equation}
which verifies both the Smarr relation and the scalar shift identity.

The two charges are determined by an overall magnitude and a distribution angle. Changing only the angle changes the charge carried by each field while keeping \(Q_1^2+Q_2^2\) fixed, so the two electromagnetic work terms cancel.
The charges are still independent coordinates: with
$R=\sqrt{Q_1^2+Q_2^2}$,
\begin{equation}
 \begin{aligned}
 &\left|
 \det\frac{\partial(M,J,Q_1,Q_2)}
 {\partial(m,b,\delta,\psi)}
 \right|\\
 &\qquad=
 \frac{Rm^2\cosh\delta}{2}
 \left(1+\frac32\sinh^2\delta\right)>0,
 \end{aligned}
 \label{eq:o2-Jacobian}
\end{equation}
so $(Q_1,Q_2)$ provide two thermodynamic coordinates before
identifying states related by the global $O(2)$ symmetry.
The field rank is one.

\subsection{ Single-valued mass function}

Eliminating the seed parameters does not produce a different thermodynamic description. It rewrites the same outer-horizon thermodynamics in terms of the entropy, angular momentum, charges, and scalar boundary value. The single-valued mass function assigns a mass to each allowed state, and its derivatives along allowed variations must reproduce the temperature, angular velocity, electromagnetic work, and scalar response. For the equal-coupling branch, these derivatives agree with the quantities calculated directly from the outer-horizon geometry and gauge fields. The mass function therefore connects the thermodynamic quantities, while the physical root and regular domain ensure that the relation describes the chosen outer-horizon branch. We define
\begin{equation}
 \widehat Q_i=\e^{\sqrt3\phi_0/2}Q_i \quad
 \widehat R=\sqrt{\widehat Q_1^2+\widehat Q_2^2}
 \quad L=\frac{S}{2\pi}.
 \label{eq:o2-dressed}
\end{equation}
For $L>|J|$ and $\widehat R>0$, determine the boost parameter
$\delta>0$ from
\begin{equation}
 \cosh^3\delta-\cosh\delta
 =\frac{8L\widehat R^2}{J^2+L^2}.
 \label{eq:o2-cubic}
\end{equation}
The left-hand side is strictly increasing for $\delta\ge0$
and ranges from zero to infinity, so this equation determines
a unique $\delta>0$.
The mass and seed parameters are then
\begin{equation}
\begin{aligned}
 &M^2=\frac{(J^2+L^2)(\cosh^2\delta+1)^2}{8L\cosh\delta},\\
 &b=\frac{J}{m\cosh\delta}, \qquad m=\sqrt{\frac{J^2+L^2}{2L\cosh\delta}},\\
 &\ell=\frac{L}{m\cosh\delta}-m,\qquad
 n_i=\frac{\widehat Q_i}{\widehat R}.
\end{aligned}
\label{eq:o2-reconstruct-1}
\end{equation}

The reconstruction can be checked directly using
\begin{align}
 &J^2+L^2
 =2m^2L\cosh\delta,\notag\\
 &\widehat R^2
 =\frac14m^2\cosh^2\delta\,\sinh^2\delta, \quad \ell^2
 =m^2-b^2.
 \label{eq:o2-audit-identities}
\end{align}
Eliminating $m$ from the first two equations gives
Eq.~\eqref{eq:o2-cubic}. Conversely, the solution of
Eqs.~\eqref{eq:o2-reconstruct-1}
satisfies all three identities. In particular,
\begin{equation}
 \ell=\frac{L^2-J^2}{2Lm\cosh\delta}>0
\end{equation}
selects the outer branch. At zero charge,
Eq.~\eqref{eq:o2-cubic} gives $\cosh\delta=1$,
so the unique nonnegative solution is $\delta=0$.
The remaining thermodynamic quantities are
\begin{align}
 T&=\frac{\ell}{4\pi L}, &\Omega_H&=\frac{b}{2L},\notag\\
 \Phi_i&=\e^{\sqrt3\phi_0/2}n_i\tanh\delta, &P_i&=0,\notag\\
 \Sigma&=-\frac{\sqrt3}{2}m\sinh^2\delta, &A_H&=4S.
 \label{eq:o2-intrinsic-responses}
\end{align}
The scalar term in the extended first law follows from the same mass function. At fixed physical $S,J,Q_i$,

\begin{equation}
 \left.\frac{\partial M}{\partial\phi_0}\right|_{S,J,Q_i}
 =\frac{\sqrt3}{2}\sum_i Q_i\Phi_i
 =-\frac{\Sigma}{2}.
\end{equation}
Thus the charge factors in Eq.~\eqref{eq:o2-dressed} reproduce the extended first law
Eq.~\eqref{eq:general-Smarr}, including variations of $\phi_0$.

The full charged nonextremal domain is
$L>|J|$, $\widehat R>0$, including $J=0$.
As $\widehat R\to0$, $\delta\to0$ and $\Phi_i\to0$,
and the geometry and thermodynamic quantities approach Kerr
independently of the charge direction.
At zero charge, $n_i$ is a redundant parameter.
The finite-area extremal boundary is $L=|J|>0$,
where $T=0$ and $S_{\rm ext}=2\pi|J|$.

\section{Kerr--Newman zero-coupling and balanced embeddings}
\label{sec:embedding}

\subsection{Zero-coupling branch}
\label{sec:zero-coupling}

\subsubsection{Field-equation and complete outer-horizon thermodynamics}
We adopt the following conventions for zero coupling and charge normalization:
\begin{equation}
 \begin{gathered}
 a_1=a_2=0,
 \qquad
 {\cal Q}=\frac1{16\pi}\int_{S^2_\infty}\star F,\\
 {\cal Q}^2=Q_1^2+Q_2^2,
 \qquad
 {\cal Z}_0:=2{\cal Q},
 \qquad
 X_0={\cal Z}_0^2,
 \end{gathered}
 \label{eq:zero-definitions}
\end{equation}
the dilaton decouples and can be chosen constant. The Kerr--Newman solution
and its Christodoulou--Ruffini formula give a normalization check
\cite{NewmanEtAl1965,Carter1968,ChristodoulouRuffini1971}.
Here ${\cal Q}$ is the physical electric flux charge and ${\cal Z}_0$ is the
amplitude appearing in the metric. For ${\cal Q}>0$, set
\begin{equation}
 F_i=\frac{Q_i}{\cal Q}F.
 \label{eq:zero-embedding}
\end{equation}
The coefficients in Eq.~\eqref{eq:zero-embedding} are constant on spacetime.
The seed Maxwell equations and the quadratic Maxwell stress tensor therefore give
\begin{equation}
 \begin{gathered}
 \sum_i{\cal T}[F_i]=\frac{Q_1^2+Q_2^2}{{\cal Q}^2}{\cal T}[F]
 ={\cal T}[F], \quad \dd F_i=0,\\
 \Box\phi=\frac14\sum_i a_i\e^{a_i\phi_0}F_i^2=0,\quad \quad \quad \quad \dd\star F_i=0.
 \end{gathered}
 \label{eq:zero-field-equation-check}
\end{equation}
Thus the Maxwell, scalar, and Einstein equations all hold for this
two-charge embedding of the purely electric Kerr--Newman family. Its neutral limit is
continuous. The internal rotation changes the distribution between $Q_1$
and $Q_2$ while preserving the geometry and mass. The potentials transform
with the charge direction, preserving the complete electromagnetic work.
This branch checks the charge normalization, two-field distribution, and
thermodynamic conventions in the absence of scalar coupling.

With $h=S/\pi$, the outer-horizon quantities are
\begin{align}
 \Omega_H&=\frac{J}{Mh},
 &\Phi_i&=\frac{2Q_i(h+X_0)}{Mh},
 &P_i&=\Psi_i=\Sigma=0,\notag\\
 A_H&=4S,
 &T&=\frac{h^2-X_0^2-4J^2}{8\pi M h^2},
 &\kappa&=2\pi T.\label{eq:zero-ATS}
\end{align}

\subsubsection{First law and Smarr relation}

The first law can be checked term by term using the mass relation
below. Since $\dd h=\dd S/\pi$
and $\dd X_0=8\sum_iQ_i\dd Q_i$, differentiating
Eq.~\eqref{eq:zero-mass} gives

\begin{align}
 &2M\dd M
 ={}\frac{h^2-X_0^2-4J^2}{4h^2}\dd h
 +\frac{2J}{h}\dd J
 +\frac{h+X_0}{2h}\dd X_0,\notag\\
 &\dd M={}T\dd S+\Omega_H\dd J+\sum_i\Phi_i\dd Q_i.
 \label{eq:zero-first-law-check}
\end{align}
Each coefficient agrees with the corresponding horizon quantity in
Eqs.~\eqref{eq:zero-ATS}. Substituting these
quantities also verifies the Smarr relation directly:
\begin{equation}
 2TS+2\Omega_HJ+\sum_i\Phi_iQ_i=\frac{(h+X_0)^2+4J^2}{4Mh}=M.
\label{eq:zero-smarr-check}
\end{equation}
The final equality is Eq.~\eqref{eq:zero-mass}. The mass is independent
of $\phi_0$, consistently with $\Sigma=0$.

\subsubsection{Mass function, regular domain and extremal boundary}

The outer-horizon mass relation is
\begin{equation}
 M^2=\frac{h}{4}+\frac{X_0}{2}
 +\frac{X_0^2+4J^2}{4h}.
 \label{eq:zero-mass}
\end{equation}
The physical domain can be written in entropy or mass variables as
\begin{align}
 &M>0,\qquad h>0,\qquad h\ge\sqrt{X_0^2+4J^2}, \notag\\
 &M>0,\qquad M^2\ge X_0,
 \label{eq:zero-domain-S}\\
 &J^2\le M^2(M^2-X_0),\qquad S=S_+.\notag
\end{align}
where the two algebraic entropy roots are
\begin{equation}
 S_\pm=\pi(2M^2-X_0)
 \mathbin{\pm}2\pi\sqrt{M^2(M^2-X_0)-J^2}.
 \label{eq:zero-entropy}
\end{equation}
The outer horizon has $S=S_+$. The second root gives the positive geometric
inner entropy where the inner horizon is regular, as specified in
Sec.~\ref{app:two-horizon-audit}.

The entropy product is
\begin{equation}
 S_+S_-=\pi^2(X_0^2+4J^2).
 \label{eq:zero-area-product}
\end{equation}

Equality in Eq.~\eqref{eq:zero-domain-S} gives the extremal boundary,
\begin{equation}
 \begin{gathered}
 M_{\rm extremal}^2=\frac{X_0+\sqrt{X_0^2+4J^2}}2,\\
 S_{\rm ext}=\pi\sqrt{X_0^2+4J^2}.
 \end{gathered}
 \label{eq:zero-extremal}
\end{equation}
At $J=0$ and $X_0>0$, this includes static extremal RN with
$M_{\rm extremal}^2=X_0$ and $S_{\rm ext}=\pi X_0$.
Here $\mu_{\rm LR}^2=M^2-X_0$. Substitution of
Eqs.~\eqref{eq:zero-ATS} and \eqref{eq:zero-mass} gives
\begin{equation}
 M^2-X_0=(2TS)^2+\left(\frac{S\Omega_H}{\pi}\right)^2.
 \label{eq:zero-rotating-identity}
\end{equation}
The two charges are independent coordinates, while the geometry depends
on their $O(2)$ norm and the fields remain parallel. This supplies the
zero-coupling check of the two-charge normalization and quadratic relation.

\subsection{Balanced branch}

\subsubsection{Scalar cancellation and field-equation}

We use the action and conventions of
Eqs.~\eqref{eq:general-action}--\eqref{eq:general-Smarr} and take
\begin{equation}
 a_1>0>a_2, \qquad \phi=\phi_0.
 \label{eq:balanced-assumptions}
\end{equation}
Let $(g,F)$ solve Einstein--Maxwell theory with action
\begin{equation}
 I_{\rm EM}=\frac1{16\pi}\int\dd^4x\sqrt{-g}
 \left(R-\frac14F^2\right).
 \label{eq:EM-action}
\end{equation}
The two target fields are defined as
\begin{equation}
F_i=\epsilon_i\sqrt{w_i}\,e^{-a_i\phi_0/2}F,
\qquad \epsilon_i=\pm1,\label{eq:balanced-embedding}
\end{equation}
where
\begin{equation}
a_1>0>a_2,\quad
w_1=-\frac{a_2}{a_1-a_2},\quad
w_2=\frac{a_1}{a_1-a_2} \label{eq:balanced-weights}.
\end{equation}
These two weights satisfy
\begin{equation}
w_1+w_2=1,\qquad a_1w_1+a_2w_2=0 \label{eq:balanced-weights2}.
\end{equation}
They play distinct roles. The first condition guarantees geometric invariance. The total contribution of the two Maxwell fields to the Einstein equations reads
\begin{equation}
\sum_i e^{a_i\phi_0}T_{\mu\nu}[F_i]
=(w_1+w_2)T_{\mu\nu}[F]
=T_{\mu\nu}[F] \label{eq:geometric-invariance}.
\end{equation}
Hence the original metric still satisfies the Einstein equations. 

The second condition ensures that the scalar can remain constant. A constant scalar obeys \(\Box\phi=0\), and the scalar sources generated by the two fields cancel exactly:
\begin{equation}
\sum_i a_i e^{a_i\phi_0}F_i^2
=(a_1w_1+a_2w_2)F^2=0\label{eq:remain-constant}.
\end{equation}
It is the scalar source that cancels here, not the electromagnetic stress tensor; the stress contributions of the two fields still add up. Meanwhile, since each \(F_i\) is a constant multiple of the seed field and \(\phi_0\) is also constant, the Maxwell equations and Bianchi identity for the seed directly yield
\begin{equation}
\mathrm dF_i=0,\qquad
\mathrm d\!\left(e^{a_i\phi_0}\star F_i\right)=0\label{eq:Maxwell-Bianchi}.
\end{equation}
Therefore, the Maxwell equations, Einstein equations and scalar equation are all satisfied separately.

After imposing the field equations, the magnetically charged solutions further require gauge potentials defined consistently on north and south patches. Detailed discussions of the electric potential, magnetic potential, and the full electromagnetic work will be presented in the subsequent thermodynamic subsection.

The fields in Eq.~\eqref{eq:balanced-embedding} correspond to the
$\vartheta_0=0$ opposite-sign case of
Proposition~\ref{prop:constant-scalar-seed-dual}: both target fields
are constant multiples of the seed field, without an additional
electromagnetic duality rotation. For a charged rotating
Kerr--Newman seed with $b\ne0$, the two electromagnetic invariants
satisfy the independence assumption of that classification.
Every opposite-sign map covered by the classification can therefore
be obtained by a common duality rotation and independent choices
of field signs, with the same weights $w_i$.

For parallel fields, the weights can also be determined directly
without assuming rotation. Write $F_i=k_iF$ with real constant
$k_i$, and define $x_i=\e^{a_i\phi_0}k_i^2\ge0$.
The Einstein equation requires the total Maxwell stress tensor
to equal that of the seed, while the scalar equation requires
the two scalar sources to cancel. When $F^2\not\equiv0$,
these conditions give
\begin{equation}
 x_1+x_2=1,\qquad a_1x_1+a_2x_2=0.
 \label{eq:balanced-uniqueness-system}
\end{equation}
The solution is $x_i=w_i$, so
\begin{equation}
 k_i=\epsilon_i\sqrt{w_i}\,\e^{-a_i\phi_0/2},
 \qquad \epsilon_i=\pm1.
\end{equation}
Thus the weights are uniquely fixed, and only the field signs
remain free.

If the seed field is nonzero but $F^2\equiv0$, the scalar equation
is automatically satisfied for a parallel split. The Einstein
equation still requires $x_1+x_2=1$, but the scalar equation
no longer fixes the relative weights. A static dyonic
Reissner--Nordstr\"om field with equal nonzero electric and magnetic
charge magnitudes provides an example: $F^2\equiv0$ while
$F\mathbin{\cdot}\star F\ne0$. The balanced weights still give a
solution, but other nonnegative weights satisfying $x_1+x_2=1$
are also allowed. Additional static combinations involving both
$F$ and $\star F$ are discussed below
Eq.~\eqref{eq:KN-complex-invariant}.

The opposite-sign Einstein--Maxwell truncation was given in
Ref.~\cite{LuYangLu2025MassCharge}.
The field substitution reduces the EMMD equations to the
Einstein--Maxwell equations, so it applies to both static and
rotating seeds. For $a_1=-a_2=\alpha>0$, the weights are
$w_1=w_2=1/2$, and the coefficients multiplying the seed field are
\begin{equation}
 k_1=\frac{\epsilon_1}{\sqrt2}\e^{-\alpha\phi_0/2},
 \qquad
 k_2=\frac{\epsilon_2}{\sqrt2}\e^{\alpha\phi_0/2}.
\end{equation}
Their magnitudes are therefore both $1/\sqrt2$ at $\phi_0=0$;
the exponential factors must be retained at arbitrary $\phi_0$.
Related two-field EMD and STU embeddings, including
Kerr--Newman backgrounds, appear in the one-loop calculations of
Refs.~\cite{KaranPunia2023,KaranPuniaBiswas2025}.

At fixed couplings, scalar boundary value, and field signs, the
seed electric charge $q$ and magnetic charge $p$ can vary
independently within the regular parameter domain. Their relative
distributions between the two Maxwell fields are fixed, so the
four target charges lie on a two-dimensional plane.
Setting $p=0$ leaves one continuous electric-charge parameter
and a fixed ratio $Q_1/Q_2$. Allowing both signs of $q$ gives a
line; fixing its sign selects a ray. The charge ratio follows
from the classical scalar-source cancellation condition.

The common duality angle rotates the pair $(q,p)$.
Once both signed seed charges are included, this angle does not
add a third independent charge parameter. The field signs are
discrete choices and likewise do not increase the charge-space
dimension. For special static seed fields, additional weight and field combinations are available.

\subsubsection{Duality-equivalent representative}
\label{cor:direct-hodge-constant-scalar}

The same-sign constant-scalar map in
Proposition~\ref{prop:constant-scalar-seed-dual} follows by dualizing
the second field of an opposite-sign balanced solution:
\begin{equation}
 \widetilde a_2=-a_2,
 \qquad
 \widetilde F_2:=\e^{a_2\phi}\star F_2.
 \label{eq:direct-hodge-vector-dualization}
\end{equation}
The Lorentzian identities $\star\star F_2=-F_2$ and
${\cal T}[\star F_2]={\cal T}[F_2]$ preserve the Einstein equation,
while the coupling sign change preserves the scalar equation.
With our charge and dual-potential conventions, the transformation acts as
\begin{equation}
 (\widetilde Q_2,\widetilde P_2)=(-P_2,Q_2),
 \qquad
 (\widetilde\Phi_2,\widetilde\Psi_2)=(-\Psi_2,\Phi_2).
 \label{eq:direct-hodge-dual-charge-potential-map}
\end{equation}

The complete electromagnetic work form, including the magnetic patch
terms, is unchanged. The same mass relation, first law, Smarr relation,
and regular domain therefore apply after this charge transformation.
For nonzero real fields the same-sign representative has $r_F=2$;
its charge dimension remains $d_Q=2$. This is a different duality
representation of the balanced thermodynamics.

\subsubsection{Kerr--Newman seed: metric and parameters}
\label{sec:mother}

In the normalization of Eq.~\eqref{eq:EM-action}, the physical electric and magnetic charges of the seed are
\begin{equation}
\mathfrak q=\frac1{16\pi}\int_{S^2_\infty}\star F,\quad \mathfrak p=\frac1{16\pi}\int_{S^2_\infty}F.
\label{eq:balanced-seed-charges1}
\end{equation}
They belong to single-field seeds, rather than the respective \(Q_i, P_i\) of the two Maxwell fields after embedding. The metric depends on the charges through
\begin{equation}
 \mathcal Z^2=4(\mathfrak q^2+\mathfrak p^2).
 \label{eq:balanced-seed-charges2}
\end{equation}
 With fixed \(M,b\), the Kerr–Newman metric remains unchanged so long as \(\mathfrak q^2+\mathfrak p^2\) is invariant, while the electric charge, magnetic charge and their gauge fields may differ.
The dyonic Kerr--Newman functions and regular north/south potentials are
\cite{NewmanEtAl1965,Carter1968}
\begin{equation}
 \rho^2=r^2+b^2\cos^2\theta, \quad \Delta=r^2-2Mr+b^2+\mathcal Z^2.
\label{eq:rho-delta}
\end{equation}
 \begin{align}
 A^{(N/S)}={}&-\frac{4\mathfrak q r}{\rho^2}
 \left(\dd t-b\sin^2\theta\,\dd\varphi\right)\notag\\
 &+\frac{4\mathfrak p\cos\theta}{\rho^2}
 \left[b\,\dd t-(r^2+b^2)\dd\varphi\right]
 \mathbin{\pm}4\mathfrak p\,\dd\varphi,\notag\\
 &F={}\dd A^{(N)}=\dd A^{(S)}.
 \label{eq:KN-potential}
\end{align}
The upper sign is regular on the north axis and the lower sign on the south
axis. On the overlap, $A^{(N)}-A^{(S)}=8\mathfrak p\,\dd\varphi$, so both potentials define the same field strength.
The factor $4$ is fixed by the action and flux-charge normalization:
the asymptotic fluxes are $16\pi\mathfrak q$ and $16\pi\mathfrak p$.

We now check the independence assumption in
Proposition~\ref{prop:constant-scalar-seed-dual}.
We combine the two electromagnetic invariants into
\begin{equation}
 \mathcal I
 :=F^2+\mathrm iF\mathbin{\cdot}\star F
 =-\frac{32(\mathfrak q-\mathrm i\mathfrak p)^2}
 {(r+\mathrm i b\cos\theta)^4}.
 \label{eq:KN-complex-invariant}
\end{equation}
For $b\ne0$ and $\mathfrak q^2+\mathfrak p^2>0$,
$\mathcal I$ is nonzero throughout the regular exterior.
If the two invariants satisfied a nontrivial linear relation
with real constant coefficients on an open region, the complex quantity $\mathcal I$ can only take values along a fixed straight line passing through the origin on the complex plane. Consequently, its phase must be fixed modulo \(\pi\).
The ratio $\mathcal I/\overline{\mathcal I}$ would therefore
be constant. However, direct differentiation gives
\begin{equation*}
 \partial_\theta
 \log\left(\frac{\mathcal I}{\overline{\mathcal I}}\right)
 =\frac{8\mathrm i b r\sin\theta}{\rho^2}.
\end{equation*}
In the black-hole exterior, $r>0$, so this derivative is
nonzero away from the rotation axis. Every open exterior
region contains such points. The two invariants are therefore
linearly independent on every open region of the regular
exterior. This conclusion includes rotating purely electric
and purely magnetic seeds, as well as finite-area extremal
black holes with $b\ne0$.

For a charged static seed, setting $b=0$ gives
\begin{equation}
F^2
 =\frac{32(\mathfrak p^2-\mathfrak q^2)}{r^4},\quad
 F\mathbin{\cdot}\star F
=\frac{64\mathfrak q\mathfrak p}{r^4}.
 \label{eq:KN-static-invariants}
\end{equation}
The scalar equation supplies one real constraint, allowing additional
direct--dual combinations. Equal electric and magnetic magnitudes give
$F^2\equiv0$ but $F\mathbin{\cdot}\star F\ne0$. A null field requires both invariants to vanish simultaneously while the field itself remains nonvanishing.

We next introduce parameters that simplify the horizon
locations and thermodynamic quantities.
For $M>\mathcal Z\ge0$, write
\begin{equation}
 \begin{aligned}
 M&=m\cosh\eta,
 &\mathcal Z&=m\sinh\eta,\\
 m&>0,
 &\eta&\ge0.
 \end{aligned}
 \label{eq:hyperbolic}
\end{equation}
Here $M$ is the physical mass, while
$m=\sqrt{M^2-\mathcal Z^2}$ is an auxiliary parameter.

The seed field is linear in $(\mathfrak q,\mathfrak p)$,
so reversing both charges sends $F$ to $-F$.
If both field signs $\epsilon_i$ are reversed at the same
time, the target fields in
Eq.~\eqref{eq:balanced-embedding} remain unchanged.
We therefore identify
\begin{equation}
 (\mathfrak q,\mathfrak p,\epsilon_1,\epsilon_2)\sim
 (-\mathfrak q,-\mathfrak p,-\epsilon_1,-\epsilon_2).
 \label{eq:balanced-sign-identification}
\end{equation}
This removes duplicate descriptions of the same target
solution. In the nonextremal domain $|b|<m$, define
\begin{equation}
 \ell=\sqrt{m^2-b^2},\quad
 r_\pm=M\pm\ell,\quad
 D_H=r_+^2+b^2.
 \label{eq:horizon-definitions}
\end{equation}
The quantity $D_H$ is the area factor used below.
In the neutral static case, $r_-=0$ is not a regular
inner horizon and is excluded from the two-horizon comparison.
This parametrization not only simplifies the horizon roots but also allows direct verification of the thermodynamic relations.

\subsubsection{Outer-horizon thermodynamics and the charge plane}

Applying Eq.~\eqref{eq:balanced-embedding} to
Eq.~\eqref{eq:KN-potential} gives
\begin{equation}
\begin{aligned}
 &M=m\cosh\eta,\quad \quad \quad J=Mb,\quad \quad \quad \Sigma=0,\\
 &Q_i=\epsilon_i\sqrt{w_i}\,\e^{a_i\phi_0/2}\mathfrak q, \quad
 P_i=\epsilon_i\sqrt{w_i}\,\e^{-a_i\phi_0/2}\mathfrak p.
\end{aligned}
 \label{eq:mother-global-1}
\end{equation}
The horizon quantities are
\begin{equation}
\begin{aligned}
 A_H&=4\pi D_H,
 &S&=\pi D_H,\\
 \Omega_H&=\frac{b}{D_H}, &T&=\frac{\ell}{2\pi D_H}, \quad \quad \kappa=\frac{\ell}{D_H}.
\end{aligned}
 \label{eq:mother-horizon-1}
\end{equation}
These relations satisfy $S=A_H/4,T=\kappa/2\pi$. The gauge-invariant seed and target potentials are
\begin{equation}
\begin{aligned}
 &\Phi=\frac{4\mathfrak q r_+}{D_H},
 &&\Phi_i=\epsilon_i\sqrt{w_i}\,\e^{-a_i\phi_0/2}\Phi,\\
 &\Psi=\frac{4\mathfrak p r_+}{D_H},
 &&\Psi_i=\epsilon_i\sqrt{w_i}\,\e^{a_i\phi_0/2}\Psi.
\label{eq:balanced-dyonic-potential-map}
\end{aligned}
\end{equation}
Define the coefficients
\begin{equation}
 \alpha_i:=\epsilon_i\sqrt{w_i}\,\e^{a_i\phi_0/2},
 \qquad
 \beta_i:=\epsilon_i\sqrt{w_i}\,\e^{-a_i\phi_0/2}.
 \label{eq:balanced-plane-coefficients}
\end{equation}
Then the charge and potential mappings can be compactly written as
\begin{equation}
\begin{aligned}
Q_i=\alpha_i\mathfrak q,\quad
\Phi_i=\beta_i\Phi,\quad
P_i=\beta_i\mathfrak p,\quad
\Psi_i=\alpha_i\Psi,
\end{aligned}
\end{equation}
where $\alpha_i\beta_i=w_i$.
That is, the electric charge \(Q_i\) and its electric potential \(\Phi_i\) carry opposite scalar exponents; the same holds for the magnetic charge \(P_i\) and its magnetic potential \(\Psi_i\).

This distinction comes from the definition of physical charges, not from artificial choices. The field mapping is $F_i=\beta_iF$.
The magnetic charge is directly given by the flux of \(F_i\), hence $P_i=\beta_i\mathfrak p$.
The flux definition for electric charge additionally includes \(e^{a_i\phi_0}\), so
\begin{equation}
Q_i=e^{a_i\phi_0}\beta_i\mathfrak q
=\alpha_i\mathfrak q.
\end{equation}
This follows by substituting the field mapping directly into the flux definition. The electric and magnetic potentials transform according to the definitions of the corresponding gauge potential and dual gauge potential.

These opposite exponents precisely cancel the scalar factors in the work terms. For instance,
\begin{equation}
Q_i\Phi_i=w_i\mathfrak q\Phi,
\qquad
P_i\Psi_i=w_i\mathfrak p\Psi.
\end{equation}
Thus, even though the two fields carry distinct charges separately, their total electromagnetic contributions still match those of the seed. Meanwhile,
\begin{equation}
\begin{aligned}
&A_i^{(N/S)}=\beta_i A^{(N/S)},\\
&A_i^{(N)}-A_i^{(S)}
=8\beta_i\mathfrak p\,\mathrm d\varphi
=8P_i\,\mathrm d\varphi.
\end{aligned}
 \label{eq:balanced-difference}
\end{equation}
This shows that the northern and southern gauge patches of the seed can be transported to the target field via the same embedding map.

With \(a_i,\phi_0,\epsilon_i\) fixed, \(\alpha_i,\beta_i\) are constant coefficients, the four charges satisfy two linear constraints
\begin{equation}
 \alpha_2Q_1-\alpha_1Q_2=0,
 \qquad
 \beta_2P_1-\beta_1P_2=0.
 \label{eq:balanced-dyonic-plane}
\end{equation}
All admissible charge combinations lie on a two dimensional plane within the four dimensional charge coordinate space.
The charge map has rank two, while $F_1\parallel F_2$ gives field rank one.
The radial charge combination invariant under the scalar shift is
\begin{equation}
 \begin{aligned}
 X_{\rm bal}^{\rm dy}
:&=4\sum_i\left(
 \e^{-a_i\phi_0}Q_i^2+\e^{a_i\phi_0}P_i^2\right)\\
 &=4(\mathfrak q^2+\mathfrak p^2)=\mathcal Z^2.
 \end{aligned}
 \label{eq:balanced-dyonic-invariant}
\end{equation}

On the purely electric slice $\mathfrak p=0$, the charge ratios hold
throughout the corresponding rotating electric rays:

\begin{equation}
 \begin{aligned}
 &\frac{Q_1}{Q_2}
 =\frac{\epsilon_1}{\epsilon_2}\sqrt{-\frac{a_2}{a_1}}\,
 \e^{(a_1-a_2)\phi_0/2},\\
 &X_{\rm bal}^{\rm dy}\big|_{\mathfrak p=0}
=4\left(\frac{Q_1^2}{w_1}\right)^{w_1}
 \left(\frac{Q_2^2}{w_2}\right)^{w_2}.
 \end{aligned}
 \label{eq:X-and-ratio}
\end{equation}

At fixed $\phi_0$, this leaves one continuous electric-charge direction,

\begin{equation}
 \dd\ln\left|\frac{Q_1}{Q_2}\right|=0,
 \qquad
 \frac{\dd Q_1}{Q_1}=\frac{\dd Q_2}{Q_2}
 =\frac{\dd\mathfrak q}{\mathfrak q}.
 \label{eq:balanced-fixed-modulus-tangent}
\end{equation}

Changing the scalar boundary value rotates this ray in charge space:

\begin{equation}
 \dd\ln\left|\frac{Q_1}{Q_2}\right|
 =\frac{a_1-a_2}{2}\,\dd\phi_0.
 \label{eq:balanced-modulus-sweep}
\end{equation}

The purely magnetic line follows by exchanging electric and magnetic
quantities and reversing the exponent of $\phi_0$. Each line is a
one-dimensional subset of the full dyonic charge plane.
Thus, at fixed $\phi_0$, the full dyonic branch has two independent
charge amplitudes, while its purely electric slice has one. Changing the scalar boundary value changes the direction of the allowed electric-charge line.

\subsubsection{First law and Smarr relation}
\label{sec:thermodynamics}

Substituting the electromagnetic charge map and potential map, one may directly verify the first law. Equations~\eqref{eq:mother-global-1} and
\eqref{eq:balanced-dyonic-potential-map} give
\begin{equation}
\begin{aligned}
 \Phi_i\dd Q_i
 &=w_i\Phi\left(\dd\mathfrak q
 +\frac12a_i\mathfrak q\,\dd\phi_0\right),\\
 \Psi_i\dd P_i
 &=w_i\Psi\left(\dd\mathfrak p
 -\frac12a_i\mathfrak p\,\dd\phi_0\right),
 \label{eq:balanced-individual-work}
\end{aligned}
\end{equation}
and hence
\begin{equation}
 \sum_i(\Phi_i\dd Q_i+\Psi_i\dd P_i)
 =\Phi\dd\mathfrak q+\Psi\dd\mathfrak p.
 \label{eq:work-collapse}
\end{equation}
The $\dd\phi_0$ terms cancel because $\sum_i a_iw_i=0$, in agreement with
$\Sigma=0$. This includes the magnetic patch contribution.
Taking $(M,b,\mathfrak q,\mathfrak p)$ as independent variables gives
\begin{align}
 &T\dd S
 =\frac{r_+^2\dd M+b(\ell-r_+)\dd b}{D_H}
 -\frac{4r_+}{D_H}
 (\mathfrak q\,\dd\mathfrak q+\mathfrak p\,\dd\mathfrak p),\notag\\
 &\Omega_H\dd J
 =\frac{b^2\dd M+bM\dd b}{D_H}.
 \label{eq:Tds}
\end{align}
Using $r_+=M+\ell$, the sum of all terms in
Eqs.~\eqref{eq:work-collapse}--\eqref{eq:Tds} equals $\dd M$, this proves that the first law holds on the family of Balanced branch.
For the Smarr relation,
\begin{equation}
 \begin{aligned}
 &2TS=\ell,\quad
 2\Omega_HJ=\frac{2Mb^2}{D_H},\\
 &\sum_i(\Phi_iQ_i+\Psi_iP_i)
 =\frac{\mathcal Z^2r_+}{D_H},
 \end{aligned}
\end{equation}
and the horizon equation makes their sum equal to $M$.
Since 
\begin{equation}
Q_i\Phi_i=w_i\mathfrak q\Phi, \quad
P_i\Psi_i=w_i\mathfrak p\Psi, \quad
\sum_i a_iw_i=0.
\end{equation} 
The scalar-shift identity gives
\begin{equation}
\Sigma=\sum_i a_i(Q_i\Phi_i-P_i\Psi_i)=0.
\end{equation}
On the balanced family, 
\begin{equation}
\mu_{\rm LR}^2=M^2-X_{\rm bal}^{\rm dy}=m^2,\:
2TS=\ell, \:
S\Omega_H/\pi=b.
\end{equation} 
Since $\ell^2+b^2=m^2$, the seed relation \eqref{eq:intro-unified-rotating-identity} follows immediately.

\subsubsection{Mass function, regular domain and extremal boundary}

Eliminating $(r_+,b,\mathfrak q,\mathfrak p)$ , define $X=X_{\rm bal}^{\rm dy}$, we hanve
\begin{equation}
 M^2(S,J,Q_i,P_i;\phi_0)
 =\frac{S}{4\pi}+\frac{X}{2}
 +\frac{\pi}{4S}\left(X^2+4J^2\right).
 \label{eq:fundamental}
\end{equation}
This is the Kerr--Newman Christodoulou--Ruffini formula
\cite{ChristodoulouRuffini1971,CveticEtAl2018} with
$\mathcal Z^2=X_{\rm bal}^{\rm dy}$. On a fixed signed balanced plane, the
nonextremal outer branch is
\begin{equation}
 M>0,
 \qquad X>0,
 \qquad S>\pi\sqrt{X^2+4J^2}.
 \label{eq:domain}
\end{equation}
Equality holds in the extremal case. Including that boundary, the equivalent mass
conditions are
\begin{equation}
 \begin{aligned}
 M&>0, &X&>0, \quad \quad  &M^2&\ge X,\\
 J^2&\le M^2(M^2-X), &S&=S_+,
 \end{aligned}
 \label{eq:balanced-mass-domain}
\end{equation}
where
\begin{equation}
 S_\pm=\pi(2M^2-X)
 \mathbin{\pm}2\pi\sqrt{M^2(M^2-X)-J^2}.
 \label{eq:balanced-entropy-branches}
\end{equation}
The outer horizon has entropy $S_+$, and the product is
\begin{equation}
S_+S_-=\pi^2(X^2+4J^2).
\end{equation}
 At extremality,
\begin{equation}
 S_{\rm ext}=\pi\sqrt{X^2+4J^2},\:\:
 M_{\rm ext}^2=\frac{X+\sqrt{X^2+4J^2}}2.
 \label{eq:balanced-extremal}
\end{equation}
As $X_{\rm bal}^{\rm dy}\to0$, every signed balanced plane approaches the
neutral Kerr family. The mass and all potentials and horizon quantities
have the same continuous Kerr limit, while the charge-direction labels
become redundant.

Let $M_{\rm ext}$ be the algebraic extension of
\eqref{eq:fundamental} to the four charge coordinates. It give the
following representative
\begin{equation}
\begin{aligned}
 \frac{\partial M}{\partial S}
 &=\frac1{2M}\left[
 \frac1{4\pi}-\frac{\pi(X^2+4J^2)}{4S^2}\right]=T,\\
 \frac{\partial M}{\partial J}
 &=\frac{\pi J}{MS}=\Omega_H,\\
 \frac{\partial M}{\partial Q_i}
 &=\frac{2\e^{-a_i\phi_0}Q_i}{M}
 \left(1+\frac{\pi X}{S}\right)=\Phi_i,\\
 \frac{\partial M}{\partial P_i}
 &=\frac{2\e^{a_i\phi_0}P_i}{M}
 \left(1+\frac{\pi X}{S}\right)=\Psi_i.
\end{aligned}
 \label{eq:Phi-derivatives}
\end{equation}
The four derivatives in Eq.~\eqref{eq:Phi-derivatives} reproduce the
potentials obtained from the fields. The physical charges obey the two
constraints in Eq.~\eqref{eq:balanced-dyonic-plane}; allowed variations
therefore have two independent directions. Substituting them into the work
term gives Eq.~\eqref{eq:work-collapse}.

\subsubsection{Static and extremal Reissner--Nordstr\"om limits}
\label{subsec:balanced-static-limits}

At $J=0$ and $X>0$, Eq.~\eqref{eq:balanced-extremal} gives the static
extremal RN state, $M_{\rm ext}^2=X$ and $S_{\rm ext}=\pi X$.
Eq.~\eqref{eq:hyperbolic} reaches the static
charged extremal Reissner--Nordstr\"om solution through the scaling limit
\begin{equation}
\begin{aligned}
 &b=0,\qquad \quad \:\: m\to0,\qquad \quad \: \:\eta\to\infty,\\
 &m\cosh\eta\to\mathcal Z,\qquad m\sinh\eta\to\mathcal Z>0.
 \end{aligned}
\end{equation}
The limit has $M=\mathcal Z$, $r_+=\mathcal Z$,
$A_H=4\pi\mathcal Z^2$, and $T=\Omega_H=0$. It is included in the extremal
boundary of the intrinsic mass relation \eqref{eq:fundamental}. Appendix~\ref{app:regularity}
examines both this static limit and the rotating extremal horizon.

\section{Rasheed--Larsen direct--dual thermodynamics}
\label{sec:hodge}

\subsection{Embedding and on-shell duality}

Here we take an existing single-field Rasheed–Larsen black hole and transform it into an EMMD black hole with two Maxwell fields, while preserving the original metric and the full scalar profile unchanged. The first target field is taken as a constant multiple of the seed field, and the second as a constant multiple of its scalar-weighted dual. Calculations verify that this assignment leaves the total source terms in the Einstein equations and scalar equation unaltered.

It has an important distinction from the constant-scalar balanced  branch: in the equilibrium branch, scalar sources cancel one another; here, by contrast, the scalar source of the seed is retained, so that the originally position-dependent scalar field continues to satisfy the equations.

We now introduce magnetic monopole charge while keeping the physical NUT charge vanishing. The seed is Rasheed's rotating dyonic Kaluza--Klein solution, using Larsen's thermodynamic parameters and extremal limits
\cite{Rasheed1995,Larsen2000}. Let $(g,\phi,H)$ solve the one-field theory
\begin{equation}
 \begin{aligned}
 I_{\rm seed}={}&\frac1{16\pi}\int\sqrt{-g}\left[
 R-\frac12(\partial\phi)^2-\frac14\e^{a\phi}H^2\right],\\
 a={}&\sqrt3.
 \end{aligned}
 \label{eq:hodge-seed-action}
\end{equation}
For $c=\cos\vartheta$, $s=\sin\vartheta$, $\vartheta$ is the field distribution angle. Define the two oppositely coupled
fields by
\begin{equation}
 F_1=cH,\:\:  F_2=s\e^{a\phi}\star H,\:\:  
 a_1=a,  \:\:   a_2=-a.
 \label{eq:hodge-fields}
\end{equation}
Varying $\vartheta$ redistributes the seed field $H$ and its weighted dual
$\e^{a\phi}\star H$ between the two Maxwell fields. At $s=0$ only $F_1$ is present; at $c=0$
only the dual field $F_2$ is present. Both endpoints are generally dyonic.
For a purely electric seed, the first field carries electric monopole charge, and the second carries magnetic monopole charge.

The seed equations $\dd H=0=\dd(\e^{a\phi}\star H)$ and $\star^2=-1$
imply both target Bianchi identities and Maxwell equations. The map is an
on-shell electromagnetic duality rotation in $SO(2)\subset Sp(4,\mathbb R)$. Here “on-shell” means that the known seed satisfies the field equations, and the target fields also satisfy the field equations after the prescribed electromagnetic transformation. This is not an ordinary gauge transformation. A gauge transformation alters the expression of the potential without changing the field strength; here, what are modified are the two target field strengths and their electromagnetic charges.
To display it, set $G_i=\e^{a_i\phi}\star F_i$, form
$\mathcal X=(F_1,F_2,G_1,G_2)^{\mathsf T}$, and define
\begin{equation}
\Omega=\begin{pmatrix}0&I_2\\-I_2&0\end{pmatrix},\:
 K=\begin{pmatrix}0&1\\1&0\end{pmatrix}.
 \label{eq:hodge-symplectic-rotation}
\end{equation}
Matrix $K$ swaps the labels of the two fields, allowing the first field to pair with the dual of the second. Here \(\Omega\) is the matrix describing this pairing structure, and
\begin{equation}
\mathcal R_\vartheta=\begin{pmatrix}cI_2&sK\\-sK&cI_2\end{pmatrix},\qquad
 \mathcal R_\vartheta^{\mathsf T}\Omega\mathcal R_\vartheta=\Omega,
\end{equation}
indicates that this transformation preserves the symplectic structure among the electromagnetic variables. 
In the Gaillard--Zumino description \cite{GaillardZumino1981},
the transformation $\mathcal R_\vartheta$ acting on the seed reads
\begin{equation}
\begin{aligned}
 &(H,0,G,0)^{\mathsf T}\longmapsto(cH,sG,cG,-sH)^{\mathsf T},\\
 & G=\e^{a\phi}\star H.
 \end{aligned}
 \label{eq:hodge-symplectic-orbit}
\end{equation}
The relation $G_i=\e^{a_i\phi}\star F_i$ is preserved with the same dilaton.
Dualizing just the second target field gives
\begin{equation}
 \widetilde F_2:=\e^{-a\phi}\star F_2=-sH,
 \quad (F_1,\widetilde F_2)=(cH,-sH).
 \label{eq:hodge-dualized-split}
\end{equation}
In this dualized representation, both fields become constant multiples of the same $H$, which is an ordinary internal split. The equivalence holds for the classical
field equations. Magnetic solutions are described by global field strengths
and gauge potentials on overlapping patches.

Direct source matching gives
\begin{align}
 &\e^{a\phi}{\cal T}[F_1]+\e^{-a\phi}{\cal T}[F_2]
 =\e^{a\phi}{\cal T}[H],\notag\\
 &a\e^{a\phi}F_1^2-a\e^{-a\phi}F_2^2
 =a\e^{a\phi}H^2,
 \label{eq:hodge-source-check}
\end{align}
The sum of the squares of the two coefficients $c^2+s^2$ equals unity, ensuring that both the total stress tensor and total scalar source remain identical to those of the seed. Consequently, the target fields still satisfy all EMMD equations. Instead of cancelling the scalar sources, we retain the original scalar source of the seed.
For such seed solutions, there are relatively few constraints arising purely from the Maxwell equations. Starting from the known
Rasheed--Larsen geometry, the following calculation gives the target charge
constraint and two-field work term.

\subsection{Charge surface and electromagnetic work}
\label{subsec:hodge-charge-work}

The field map determines how the seed electric and magnetic charges
are distributed between the two target fields. We define the signed
seed charges by
\begin{equation}
 \mathsf q=\frac1{16\pi}\int_{S^2_\infty}\e^{a\phi}\star H,
 \qquad
 \mathsf p=\frac1{16\pi}\int_{S^2_\infty}H.
 \label{eq:hodge-seed-charges-def}
\end{equation}
Using Eq.~\eqref{eq:hodge-fields}, the target charges are
\begin{equation}
 Q_1=c\mathsf q,\quad P_1=c\mathsf p,\quad
 Q_2=-s\mathsf p,\quad P_2=s\mathsf q.
\end{equation}
The seed electric charge, magnetic charge, and distribution angle determine the four target electromagnetic charges. These charges always satisfy \(Q_1Q_2+P_1P_2=0\), so they cannot vary independently. Away from the zero-charge point, this constraint defines a smooth
three-dimensional charge surface, as shown in Appendix~\ref{app:charge-space}. At fixed scalar boundary value, this gives three independent charge parameters, locally described by $(\mathsf q,\mathsf p,\vartheta)$.

Each set of nonzero target electric and magnetic charges corresponds to exactly two sets of seed charges and distribution angle parameters; these two sets of parameters describe the same target charge configuration.
They are related by
\begin{equation}
 (\mathsf q,\mathsf p,\vartheta)
 \longmapsto
 (-\mathsf q,-\mathsf p,\vartheta+\pi).
 \label{eq:hodge-double-cover}
\end{equation}
Indeed, increasing $\vartheta$ by $\pi$ reverses both $c$ and $s$,
so reversing the seed charges at the same time leaves all four
target charges unchanged. The zero-charge point is excluded
because every angle gives the same zero-charge configuration there.

We next determine the electromagnetic work.
Use a consistent asymptotic reference for the seed potentials
$\Phi$ and $\Psi$, including the required magnetic gauge-patch
contributions. At fixed remaining seed parameters, choose the
reference so that simultaneous reversal of both charges gives
\begin{equation}
 (\Phi,\Psi)(-\mathsf q,-\mathsf p)
 =-(\Phi,\Psi)(\mathsf q,\mathsf p).
\end{equation}
The target potentials are
\begin{equation}
 (\Phi_1,\Psi_1,\Phi_2,\Psi_2)
 =(c\Phi,c\Psi,-s\Psi,s\Phi).
 \label{eq:hodge-potential-map}
\end{equation}
Under Eq.~\eqref{eq:hodge-double-cover}, both the seed potentials
and the coefficients $c,s$ reverse sign. The target potentials
therefore remain unchanged and do not depend on which of the two
equivalent seed descriptions is used.

Substituting the charge and potential maps gives
\begin{equation}
 \sum_i\left(\Phi_i\dd Q_i+\Psi_i\dd P_i\right)
 =\Phi\dd\mathsf q+\Psi\dd\mathsf p.
 \label{eq:hodge-work-map}
\end{equation}
The total electromagnetic work of the two target fields is
therefore exactly the seed electromagnetic work.
In particular, changing only $\vartheta$ at fixed seed data
changes the target charge distribution but gives zero total
electromagnetic work.
Direct substitution gives
\begin{equation}
 \begin{aligned}
 \sum_i(\Phi_i\dd Q_i+\Psi_i\dd P_i)
 &=(c^2+s^2)(\Phi\dd\mathsf q+\Psi\dd\mathsf p)\\
 &\quad+(c\dd c+s\dd s)
 (\mathsf q\Phi+\mathsf p\Psi).
 \label{eq:hodge-work-expansion}
 \end{aligned}
\end{equation}
Since $c^2+s^2=1$ and $c\dd c+s\dd s=0$, this reduces to
Eq.~\eqref{eq:hodge-work-map}.

The first law must be evaluated for variations that preserve
the charge constraint:
\begin{equation*}
 Q_2\dd Q_1+P_2\dd P_1
 +Q_1\dd Q_2+P_1\dd P_2=0.
\end{equation*}
It therefore fixes the total work for these allowed variations,
rather than four independent charge derivatives.

\subsection{Thermodynamics and scalar boundary value}

The Kaluza--Klein family is a one-dilaton, axion-free
specialization of rotating STU black holes
\cite{ChowCompere2014}. We use the Rasheed--Larsen formulas on
the nonextremal domain
\begin{equation}
 m>0,\quad |b|<m,\quad
 \mathscr P,\mathscr Q\ge 2m,\quad
 d=\sqrt{m^2-b^2}>0 .
 \label{eq:rasheed-domain}
\end{equation}
Here $\mathscr P$ and $\mathscr Q$ are positive auxiliary
parameters of the Rasheed--Larsen solution, rather than the
physical electromagnetic charges. The signs of the physical
electric and magnetic charges $\mathsf q$ and $\mathsf p$ are
chosen independently. At $\phi_0=0$, their magnitudes are
related to the auxiliary parameters by
\begin{equation}
 \mathsf q^2=
 \frac{\mathscr Q(\mathscr Q^2-4m^2)}
 {16(\mathscr P+\mathscr Q)},
 \qquad
 \mathsf p^2=
 \frac{\mathscr P(\mathscr P^2-4m^2)}
 {16(\mathscr P+\mathscr Q)} .
 \label{eq:rasheed-seed-charges}
\end{equation}

The original Rasheed--Larsen formulas use a different
normalization of the electromagnetic charges and their
conjugate potentials
\cite{Rasheed1995,Larsen2000}. We denote those quantities by
$(Q_{\rm R},P_{\rm R},\Phi_{\rm R},\Psi_{\rm R})$.
The flux-charge normalization adopted in this paper is related
to the Rasheed--Larsen convention by
\begin{equation}
 Q_{\rm R}=2\mathsf q,\quad
 P_{\rm R}=2\mathsf p,\quad
 \Phi=2\Phi_{\rm R},\quad
 \Psi=2\Psi_{\rm R}.
 \label{eq:rasheed-normalization}
\end{equation}
With this rescaling, the electromagnetic work term keeps the
standard first-law form:
\begin{equation}
 \Phi\,\dd\mathsf q+\Psi\,\dd\mathsf p
 =
 \Phi_{\rm R}\,\dd Q_{\rm R}
 +\Psi_{\rm R}\,\dd P_{\rm R}.
 \label{eq:rasheed-work-normalization}
\end{equation}
At $\phi_0=0$, the mass, angular momentum, and outer-horizon
quantities are
\begin{equation}
 \begin{aligned}
 &M=\frac{\mathscr P+\mathscr Q}{4},
 \qquad
 J=\frac{\sqrt{\mathscr P\mathscr Q}
 (\mathscr P\mathscr Q+4m^2)b}
 {4m(\mathscr P+\mathscr Q)},\\
 &S=\pi\sqrt{\mathscr P\mathscr Q}
 \left[
 m+\frac{\mathscr P\mathscr Q+4m^2}
 {2m(\mathscr P+\mathscr Q)}d
 \right],\\
 &\frac1T=
 \frac{\pi\sqrt{\mathscr P\mathscr Q}}{m}
 \left[
 \frac{\mathscr P\mathscr Q+4m^2}{\mathscr P+\mathscr Q}
 +\frac{2m^2}{d}
 \right],\\
 &\Omega_H=
 \frac{2m(\mathscr P+\mathscr Q)b}
 {\sqrt{\mathscr P\mathscr Q}
 \left[
 2m^2(\mathscr P+\mathscr Q)
 +(\mathscr P\mathscr Q+4m^2)d
 \right]},\\
 &\Sigma
 =\frac{\sqrt3}{4}(\mathscr Q-\mathscr P),\qquad A_H=4S,\qquad \kappa=2\pi T.
 \end{aligned}
 \label{eq:rasheed-Omega-Sigma}
\end{equation}

When $\mathscr P>2m$ and $\mathscr Q>2m$, both seed charges
are nonzero. The electric and magnetic potentials are
\begin{equation}
 \frac{\Phi}{T}
 =\sgn(\mathsf q)\frac{\pi}{m}
 \sqrt{\frac{\mathscr P(\mathscr Q^2-4m^2)}
 {\mathscr P+\mathscr Q}}
 \left(\mathscr P+\frac{2m^2}{d}\right).
 \label{eq:rasheed-Phi}
\end{equation}
\begin{equation}
 \frac{\Psi}{T}
 =\sgn(\mathsf p)\frac{\pi}{m}
 \sqrt{\frac{\mathscr Q(\mathscr P^2-4m^2)}
 {\mathscr P+\mathscr Q}}
 \left(\mathscr Q+\frac{2m^2}{d}\right).
 \label{eq:rasheed-Psi}
\end{equation}
The charge axes are reached continuously.
At $\mathscr Q=2m$, the electric charge and its nonextremal
potential vanish. At $\mathscr P=2m$, the magnetic charge
and its nonextremal potential vanish.
Equations~\eqref{eq:hodge-charge-map} and
\eqref{eq:hodge-potential-map} then give the four target
charges and their potentials.

We check the seed first law by differentiating with respect
to the four independent parameters
$x\in\{m,b,\mathscr P,\mathscr Q\}$.
On each fixed-sign component with both seed charges nonzero,
direct differentiation gives
\begin{equation}
 \begin{aligned}
 &T\partial_x S+\Omega_H\partial_xJ
 +\Phi\partial_x\mathsf q+\Psi\partial_x\mathsf p\\
 &\qquad=
 \begin{cases}
 0,&x=m,b,\\
 1/4,&x=\mathscr P,\mathscr Q.
 \end{cases}
 \end{aligned}
 \label{eq:rasheed-first-law-check}
\end{equation}
These are precisely the coefficients of the mass differential,
\begin{equation}
 \dd M=\frac{\dd\mathscr P+\dd\mathscr Q}{4}.
 \label{eq:rasheed-mass-differential}
\end{equation}
The charge signs cancel in $\Phi\dd\mathsf q$ and
$\Psi\dd\mathsf p$, so the calculation applies to every
nonzero-charge sign component. The nonextremal charge-axis
values are obtained by continuity in the signed physical
charges.

The algebraic products give
\begin{equation}
 2TS=d,\quad
 \Phi\mathsf q+\Psi\mathsf p+2\Omega_HJ
 =\frac{\mathscr P+\mathscr Q}{4}-d,
 \label{eq:rasheed-Smarr-check}
\end{equation}
and
\begin{equation}
 \Phi\mathsf q-\Psi\mathsf p
 =\frac{\mathscr Q-\mathscr P}{4}
 =\frac{\Sigma}{\sqrt3}.
 \label{eq:rasheed-scalar-shift-check}
\end{equation}
Thus both the seed Smarr relation and the scalar-shift
identity hold explicitly.

Equation~\eqref{eq:hodge-work-map} carries the first-law
calculation to the target charge surface.
Substituting the charge and potential maps into the scalar
combination gives
\begin{equation}
 \sum_i a_i(Q_i\Phi_i-P_i\Psi_i)
 =a(\Phi\mathsf q-\Psi\mathsf p)=\Sigma.
 \label{eq:rasheed-target-scalar-check}
\end{equation}
Including the scalar boundary variation derived below,
the target first law, Smarr relation, and scalar-shift
identity are
\begin{equation}
 \begin{aligned}
 \dd M={}&T\dd S+\Omega_H\dd J\\
 &+\sum_i(\Phi_i\dd Q_i+\Psi_i\dd P_i)
 -\frac12\Sigma\dd\phi_0.
 \end{aligned}
 \label{eq:hodge-first-law}
\end{equation}
\begin{equation}
 M=2TS+2\Omega_HJ
 +\sum_i(\Phi_iQ_i+\Psi_iP_i).
 \label{eq:hodge-Smarr}
\end{equation}
\begin{equation}
 \Sigma=\sum_i a_i(Q_i\Phi_i-P_i\Psi_i).
 \label{eq:hodge-Ward}
\end{equation}
The variations in Eq.~\eqref{eq:hodge-first-law} must
preserve the charge constraint.

The scalar shift restores an arbitrary boundary value
$\phi_0$. To use the zero-boundary-value seed formulas,
define
\begin{equation}
 \widehat{\mathsf q}
 =\e^{-a\phi_0/2}\mathsf q,
 \qquad
 \widehat{\mathsf p}
 =\e^{a\phi_0/2}\mathsf p.
 \label{eq:rasheed-scaled-seed-charges}
\end{equation}
The corresponding potentials satisfy
\begin{equation}
 \widehat\Phi=\e^{a\phi_0/2}\Phi,
 \qquad
 \widehat\Psi=\e^{-a\phi_0/2}\Psi.
 \label{eq:rasheed-scaled-seed-potentials}
\end{equation}
The hatted quantities are those used in the
$\phi_0=0$ formulas; the unhatted quantities are the
physical seed charges and potentials at the chosen
boundary value. Their products are unchanged:
\begin{equation}
 \widehat\Phi\,\widehat{\mathsf q}=\Phi\mathsf q,
 \qquad
 \widehat\Psi\,\widehat{\mathsf p}=\Psi\mathsf p.
 \label{eq:rasheed-scalar-invariant-products}
\end{equation}

The target charge constraint is independent of $\phi_0$.
On the nonzero charge surface, holding all physical target
charges fixed holds $\mathsf q$, $\mathsf p$, and
$\vartheta$ fixed in either local seed description.
The scalar factors therefore give
\begin{equation}
 \left.\frac{\partial\widehat{\mathsf q}}{\partial\phi_0}
 \right|_{Q_i,P_i}
 =-\frac a2\widehat{\mathsf q},
 \qquad
 \left.\frac{\partial\widehat{\mathsf p}}{\partial\phi_0}
 \right|_{Q_i,P_i}
 =\frac a2\widehat{\mathsf p}.
 \label{eq:rasheed-fixed-charge-scalar-variation}
\end{equation}
Using the seed first law at fixed $S$ and $J$, we obtain
\begin{equation}
 \left.\frac{\partial M}{\partial\phi_0}
 \right|_{S,J,Q_i,P_i}
 =-\frac a2
 \left(
 \widehat\Phi\,\widehat{\mathsf q}
 -\widehat\Psi\,\widehat{\mathsf p}
 \right)
 =-\frac{\Sigma}{2}.
 \label{eq:rasheed-scalar-boundary-response}
\end{equation}
This verifies the scalar term in
Eq.~\eqref{eq:hodge-first-law} on the nonextremal branch.

\subsection{Unique mass function on the physical outer branch}
\label{subsec:rl-unique-mass}

The preceding subsection gives the thermodynamic quantities in terms
of the Rasheed--Larsen seed parameters. We now replace these parameters
with the outer-horizon entropy, angular momentum, and allowed charges.
The result is a single-valued mass function on the regular
nonextremal outer branch. At fixed scalar boundary value, the target
charges lie on the nonzero surface $\mathcal N^\times$, where
$Q_1Q_2+P_1P_2=0$ (See Appendix~\ref{app:rl-mass-proof} and
Appendix~\ref{app:rl-reconstruction}).

We first set $\phi_0=0$. The seed geometry and thermodynamic formulas
are those of Refs.~\cite{Rasheed1995,Larsen2000}.
S\'arosi expressed the Kaluza--Klein scalar cubic and the
$F$-invariant in STU conventions \cite{Sarosi2016FInvariant}.
The conversion to our charges is
\begin{equation}
 Q^1_{\rm STU}=4\mathsf q,\quad
 P^1_{\rm STU}=4\mathsf p,\quad
 \Sigma_{1,\rm STU}=-\frac{2\Sigma}{\sqrt3}.
 \label{eq:rasheed-stu-dictionary}
\end{equation}
The mass and seed charges determine the scalar charge through
\begin{equation}
 \frac{4\mathsf q^2}{\Sigma+\sqrt3M}
 +\frac{4\mathsf p^2}{\Sigma-\sqrt3M}
 =\frac{2\Sigma}{3}.
 \label{eq:rasheed-rational}
\end{equation}
The physical root is selected in the interval
\begin{equation}
 -\sqrt3M<\Sigma<\sqrt3M.
 \label{eq:rasheed-scalar-interval}
\end{equation}
Within this interval, both denominators are nonzero, and the
rational equation is equivalent to
\begin{equation}
 \begin{aligned}
 \Sigma^3-3(M^2+2\mathsf p^2+2\mathsf q^2)\Sigma
 +6\sqrt3M(\mathsf q^2-\mathsf p^2)=0.
 \end{aligned}
 \label{eq:rasheed-cubic}
\end{equation}
With Eq.~\eqref{eq:rasheed-stu-dictionary}, this agrees with the
scalar cubic of Ref.~\cite{Sarosi2016FInvariant}.
On a one-charge axis, clearing the denominators can add roots
at $\Sigma=\pm\sqrt3M$. We therefore select the physical root
using Eq.~\eqref{eq:rasheed-rational} and the open interval
\eqref{eq:rasheed-scalar-interval}, rather than the cubic alone.

Define
\begin{equation}
 \begin{aligned}
 \sigma&=\frac{\Sigma}{\sqrt3},\qquad U=M^2-\sigma^2,\\
 V&=M^2+3\sigma^2-4(\mathsf q^2+\mathsf p^2)=m^2,
 \end{aligned}
 \label{eq:rasheed-UV}
\end{equation}
and
\begin{equation}
 \begin{aligned}
 &\mathcal F=UV,\qquad L=\frac{S}{2\pi},\qquad
 A=16\mathsf p^2\mathsf q^2,\\
 &M_0=(|\mathsf q|^{2/3}+|\mathsf p|^{2/3})^{3/2}.
 \end{aligned}
 \label{eq:rasheed-intrinsic-definitions}
\end{equation}
Here $A$ has dimensions of length to the fourth power and is not
the horizon area $A_H$, which has dimensions of length squared.
The threshold $M_0$ is defined at fixed seed charges; the allowed
rotation is restricted further by the outer-horizon condition.

On the scalar-constraint surface, the STU invariants become
\cite{Sarosi2016FInvariant,ChowCompere2014}
\begin{equation}
 F_{\rm STU}=\mathcal F+A,\qquad \Delta_{\rm STU}=-A.
 \label{eq:rasheed-STU-invariants}
\end{equation}
The outer-horizon entropy $S$ can therefore be expressed through the intrinsic variable \(L\)
\cite{ChowCompere2014},
\begin{equation}
 L=\sqrt{\mathcal F}+\sqrt{\mathcal F+A-J^2},\quad
S=2\pi L.
 \label{eq:rasheed-intrinsic-entropy}
\end{equation}
Both square roots are positive on the nonextremal outer branch.
This selects the domain
\begin{equation}
 L>0,\qquad L^2>|J^2-A|.
 \label{eq:rasheed-intrinsic-domain}
\end{equation}
For every state in this domain, the mass is the unique solution
with $M>M_0$ of
\begin{equation}
 \mathcal F(M)=\frac{(L^2+J^2-A)^2}{4L^2}.
 \label{eq:rasheed-intrinsic-mass}
\end{equation}
To evaluate the left-hand side, choose a trial mass $M>M_0$,
solve Eq.~\eqref{eq:rasheed-rational} for its unique physical
scalar root, and use that root in $\mathcal F=UV$.
Along this root, $\mathcal F(M)$ increases strictly from zero
to infinity as $M$ runs from $M_0$ to infinity. The mass equation
therefore has exactly one solution. 

The construction remains valid on the nonextremal one-charge
boundaries of the charge space, where one charge vanishes.
In the neutral limit, the scalar and electromagnetic charges
disappear and the solution continuously reduces to the Kerr
family.

Once $M$ and $\Sigma$ are known, the seed parameters are recovered as
\begin{equation}
 \mathscr Q=2(M+\sigma),\quad \mathscr P=2(M-\sigma),
 \label{eq:rasheed-reconstruct-1}
\end{equation}
\begin{equation}
 m^2=V=M^2+\Sigma^2-4(\mathsf p^2+\mathsf q^2),
 \label{eq:rasheed-reconstruct-m}
\end{equation}
\begin{equation}
 b^2=\frac{J^2m^2}{\mathcal F+A},\qquad
 \operatorname{sgn}b=\operatorname{sgn}J.
 \label{eq:rasheed-reconstruct-b}
\end{equation}
We take $m>0$ and set $b=0$ when $J=0$.
The identity connecting the denominator in
Eq.~\eqref{eq:rasheed-reconstruct-b} to the charge factors is
proved in Appendix~\ref{app:rl-reconstruction}.
The reconstructed parameters satisfy
\begin{equation}
 M\pm\sigma>0,\quad m>0,\quad |b|<m,\quad
 \mathscr P,\mathscr Q\ge2m.
 \label{eq:rasheed-physical-root}
\end{equation}
Thus the mass equation selects a regular member of the seed family,
not merely a root of an algebraic relation. The scalar charge is
fixed by the other state variables and is not an additional
independent thermodynamic charge.

We next express the result in the target charges.
At $\phi_0=0$, the charge map gives
\begin{equation}
 \mathsf q^2=Q_1^2+P_2^2,\qquad
 \mathsf p^2=P_1^2+Q_2^2.
 \label{eq:hodge-intrinsic-seed-squares}
\end{equation}
These squares are unchanged by the two equivalent parameter
choices in Eq.~\eqref{eq:hodge-double-cover}. Hence the mass is
well defined on the nonzero charge surface $\mathcal N^\times$.
For an arbitrary scalar boundary value, replace the seed-charge
squares in the mass and scalar equations by
\begin{equation}
 \widehat{\mathsf q}^{\,2}=\e^{-a\phi_0}(Q_1^2+P_2^2),\quad
 \widehat{\mathsf p}^{\,2}=\e^{a\phi_0}(P_1^2+Q_2^2).
 \label{eq:hodge-dressed-seed-squares}
\end{equation}
The corresponding charge combination and mass threshold are
\begin{equation}
 \begin{aligned}
 A&=16\widehat{\mathsf p}^{\,2}\widehat{\mathsf q}^{\,2},\\
 M_0&=\left[
 (\widehat{\mathsf q}^{\,2})^{1/3}
 +(\widehat{\mathsf p}^{\,2})^{1/3}
 \right]^{3/2}.
 \end{aligned}
 \label{eq:hodge-dressed-mass-threshold}
\end{equation}
The scalar factors cancel in $A$, so $A$ is independent of
$\phi_0$ at fixed target charges. The mass threshold generally
is not. The target charge norm satisfies
\begin{equation}
 \sum_i\left(\e^{-a_i\phi_0}Q_i^2+\e^{a_i\phi_0}P_i^2\right)
 =\widehat{\mathsf q}^{\,2}+\widehat{\mathsf p}^{\,2}.
 \label{eq:hodge-dressed-norm}
\end{equation}
Together with $S$ and $J$, these weighted charges determine the
single-valued mass $M(S,J,Q_i,P_i;\phi_0)$ on the allowed
three-dimensional charge surface.

The reconstructed mass and scalar charge also give
\begin{equation}
 \mu_{\rm LR}^2=M^2+\Sigma^2
 -4(\widehat{\mathsf q}^{\,2}+\widehat{\mathsf p}^{\,2})=m^2.
 \label{eq:hodge-mu}
\end{equation}
Independently, the seed horizon quantities satisfy
\begin{equation}
 2TS=d=\sqrt{m^2-b^2},\qquad \frac{S\Omega_H}{\pi}=b.
 \label{eq:rasheed-outer-products}
\end{equation}
Their sum of squares is therefore $d^2+b^2=m^2$, which verifies
Eq.~\eqref{eq:intro-unified-rotating-identity} on the outer branch
and by continuity at its regular finite-area extremal limits.
The fast and slow extremal boundaries are distinct; reaching
$M=M_0$ is not the condition for every rotating extremal state.
Their relation to the mass inversion is explained in
Appendix~\ref{app:rl-reconstruction}.

\subsection{Extremal limits and the purely electric subset}
\label{subsec:rl-extremal-boundaries}

The seed-charge formulas in this subsection use $\phi_0=0$. At a general boundary value, use the weighted seed charges in Eq.~\eqref{eq:hodge-dressed-seed-squares} and transform their potentials with the inverse scalar factors.

The boundary of Eq.~\eqref{eq:rasheed-intrinsic-domain} has two finite-area
parts. The fast extremal limit has $L^2=J^2-A>0$. The slow extremal limit
has $L^2=A-J^2>0$ and requires ${\mathsf p}{\mathsf q}\ne0$.
They meet at $L=0$, $J^2=A$, where the outer area vanishes.
Within the nonextremal outer branch, $J^2=A$ instead marks zero inner area;
Sec.~\ref{sec:two-horizon} accounts for this distinction.
On the fast boundary, $b=\epsilon m$ with $\epsilon=\pm1$,
\begin{equation}
\begin{aligned}
 &T=0, \quad \Omega_H=\frac{\epsilon}{\sqrt{\mathscr P\mathscr Q}},  &&S=2\pi\sqrt{J^2-16{\mathsf p}^2{\mathsf q}^2},\\
&\Phi=\frac{4{\mathsf q}}{\mathscr Q}, \quad  \Psi=\frac{4{\mathsf p}}{\mathscr P}, &&|J|>4|{\mathsf p}{\mathsf q}|.
\label{eq:rasheed-fast}
\end{aligned}
\end{equation}
The slow extremal limit is obtained by the scaling
\begin{equation}
 m\to0,\quad b/m\to\eta,\quad |\eta|<1,\quad
 \mathscr P,\mathscr Q>0\ \hbox{fixed}.
 \label{eq:rasheed-slow-limit}
\end{equation}
Writing $Y=|{\mathsf q}|^{2/3}+|{\mathsf p}|^{2/3}$, the limiting quantities are
\begin{equation}
\begin{aligned}
 &\Sigma=\sqrt3\sqrt Y
 (|{\mathsf q}|^{2/3}-|{\mathsf p}|^{2/3}),  \\ 
 &S=2\pi\sqrt{16{\mathsf p}^2{\mathsf q}^2-J^2}, &\Omega_H=0&, \\
 &M=Y^{3/2},  \quad \quad  J=4\eta|{\mathsf p}{\mathsf q}|, &T=0&. 
 \label{eq:rasheed-slow-1}\\
 \end{aligned}
\end{equation}
\begin{equation}
 \Phi=\frac{\sgn({\mathsf q})\sqrt Y}{|{\mathsf q}|^{1/3}},\quad
 \Psi=\frac{\sgn({\mathsf p})\sqrt Y}{|{\mathsf p}|^{1/3}},\quad
 |J|<4|{\mathsf p}{\mathsf q}|.
 \label{eq:rasheed-slow-2}
\end{equation}
The endpoint $|J|=4|{\mathsf p}{\mathsf q}|$ has zero area and is outside
both finite-area branches. 
On the fast branch, the angular momentum exceeds the charge-dependent threshold, the horizon angular velocity $\Omega_H$ is nonzero, $\mu_{\rm LR}>0$ and the mass remains above the charge-only threshold \(M_0\). On the slow branch, the angular momentum lies below this threshold and the mass equals \(M_0\). These black holes can carry nonzero total angular momentum even though their horizon angular velocity vanishes, that is $J\ne0$ with $0<|\eta|<1$, while
$\Omega_H=0$ and $\mu_{\rm LR}=0$. 

For $cs\ne0$ and a nonzero seed, $H$ and $\star H$ are independent over
the reals. The field rank is two in the original electric frame and becomes
one after Eq.~\eqref{eq:hodge-dualized-split}. The purely electric subset
of this particular map satisfies
\begin{equation}
 P_1=P_2=0\quad\Longrightarrow\quad Q_1Q_2=0.
 \label{eq:hodge-pure-electric-no-go}
\end{equation}
This Rasheed–Larsen direct-duality embedding cannot yield two simultaneously nonvanishing pure charges within the original electromagnetic formulation.
Indeed, Eq.~\eqref{eq:hodge-charge-map} gives
$c{\mathsf p}=0=s{\mathsf q}$. For $cs\ne0$ both seed charges vanish;
for $s=0$ or $c=0$ at most one electric charge remains. The restriction
therefore follows from the full-field Hodge map.

Thus the Rasheed--Larsen direct--dual family combines two field profiles in its original
frame with a three-dimensional constrained charge surface. Its first-law
variations stay on that surface. The global mass result consists of the
full Rasheed regular domain, the unique root selected by the rational
scalar equation, and the two finite-area extremal limits.

\section{Common outer- and inner-horizon relations}
\label{sec:two-horizon}

\subsection{Normalization, domain, and common relations}
\label{subsec:two-horizon-common}

The four embeddings have complete outer-horizon thermodynamics on the
regular domains specified in the preceding sections, including their
finite-area extremal boundaries. Within each domain, the comparison of
two horizons applies to the states with two distinct regular horizons and
positive finite geometric entropies $S_\pm=A_\pm/4>0$.
These states form the subdomains listed in
Table~\ref{tab:two-horizon-realization}; the finite-area limits with
coincident roots lie on their boundaries. The statements below apply to
every state of the four embeddings in these specified domains.

We compare the independently defined asymptotic long-range-force
parameter $\mu_{\rm LR}$ in Eq.~\eqref{eq:mu-LR-general} with the geometric
quantities $S_\pm,T_\pm,\Omega_\pm$.
All angular velocities are measured relative to the nonrotating frame at
infinity, $\Omega_\infty=0$. We use the future-inner temperature convention
$T_-=\kappa_-^{\rm future}/(2\pi)<0$.
The first laws established in the preceding sections refer to the outer
horizon. Here the inner-horizon result concerns its entropy, signed
temperature, angular velocity, and their common quadratic combination.
An inner-horizon first law would also require its electromagnetic
potentials and magnetic patch contributions.

The quadratic horizon relation and its outer--inner counterpart are known
for the relevant STU seeds \cite{ChowCompere2014},
while their temperature and angular-velocity sign relations are given in
Ref.~\cite{CveticEtAl2018}.
Using the electromagnetic charge normalizations for each EMMD embedding, we separately perform the verification of asymptotic and geometric quantities in Section~\ref{sec:two-horizon-audit}.

On the regular outer-horizon domains, including their finite-area
extremal boundaries, the relation is
\begin{equation}
 \mu_{\rm LR}^2=(2TS)^2+\left(\frac{S\Omega_H}{\pi}\right)^2.
 \label{eq:unified-rotating-identity}
\end{equation}
For every state in the distinct-horizon subdomains, the thermal and
rotational products satisfy
\begin{equation}
 \begin{aligned}
  S_-T_-&=-S_+T_+,\\
  S_-\Omega_-&=\varepsilon\,S_+\Omega_+.
 \end{aligned}
 \label{eq:two-horizon-sign}
\end{equation}
Consequently, their common sum of squares is
\begin{equation}
 \begin{aligned}
  &(2T_-S_-)^2+\left(\frac{S_-\Omega_-}{\pi}\right)^2\\
 =&(2T_+S_+)^2+\left(\frac{S_+\Omega_+}{\pi}\right)^2
=\mu_{\rm LR}^2.
 \end{aligned}
 \label{eq:two-horizon-quadratic}
\end{equation}
The relative sign is $\varepsilon=+1$ for the equal-coupling KK,
zero-coupling Kerr--Newman, and balanced Kerr--Newman embeddings.
For the Rasheed--Larsen direct--dual embedding,
\begin{equation}
 \varepsilon=\operatorname{sgn}(J^2-A).
 \label{eq:two-horizon-RL-sign}
\end{equation}
Here $A$ is the quartic charge combination defined in
Eq.~\eqref{eq:rasheed-intrinsic-definitions}; its EMMD charge dependence
is fixed by Eq.~\eqref{eq:hodge-dressed-seed-squares}.
The surface $J^2=A$ has zero inner area and is excluded from this
two-horizon statement. The direct checks below establish both the norm
and the unsquared sign relations.

\subsection{Verification of the four embeddings}
\label{app:two-horizon-audit}
\label{sec:two-horizon-audit}

For each embedding we first calculate $\mu_{\rm LR}^2$ from its mass,
scalar charge, and electromagnetic charges, then evaluate the two
horizon products from the seed geometry and compare the results.

\subsubsection{Equal-coupling Kaluza--Klein}

The asymptotic mass, scalar charge, and electric charges give
$\mu_{\rm LR}^2=m^2$ by
Eqs.~\eqref{eq:o2-static-rotation-bridge}--\eqref{eq:o2-shift-covariant-mu}.
Independently, evaluating the seed geometry at the roots in
Eq.~\eqref{eq:o2-parameters} gives
\begin{equation}
 \begin{aligned}
  S_\pm&=2\pi m\cosh\delta(m\pm\ell),\\
  T_\pm&=\frac{\pm\ell}{4\pi m\cosh\delta(m\pm\ell)},
  &\Omega_\pm&=\frac{b}{2m\cosh\delta(m\pm\ell)}.
 \end{aligned}
 \label{eq:o2-two-horizon-packet}
\end{equation}
Hence $2T_\pm S_\pm=\pm\ell$ and $S_\pm\Omega_\pm/\pi=b$, so
\begin{equation}
  (2T_\pm S_\pm)^2+
  \left(\frac{S_\pm\Omega_\pm}{\pi}\right)^2
  =\ell^2+b^2\\
  =m^2=\mu_{\rm LR}^2.
 \label{eq:o2-two-horizon-quadratic}
\end{equation}

\subsubsection{Zero-coupling Kerr--Newman}

Here $\Sigma=0$, and the independent electric charges give
$\mu_{\rm LR}^2=M^2-X_0$ from Eq.~\eqref{eq:mu-LR-general}.
For the geometric calculation set $Y=X_0$ and use the Kerr--Newman
quantities
\begin{equation}
 \begin{aligned}
  m_Y^2&=M^2-Y,& b_Y&=J/M,\\
  \ell_Y&=\sqrt{m_Y^2-b_Y^2},&
  D_\pm&=2M^2-Y\pm2M\ell_Y .
 \end{aligned}
 \label{eq:KN-two-horizon-definitions}
\end{equation}
The two horizon roots give
\begin{equation}
 \begin{aligned}
  S_\pm&=\pi D_\pm,\\
  T_\pm&=\frac{\pm\ell_Y}{2\pi D_\pm},&
  \Omega_\pm&=\frac{b_Y}{D_\pm},
 \end{aligned}
 \label{eq:KN-two-horizon-packet}
\end{equation}
Thus $2T_\pm S_\pm=\pm\ell_Y$ and $S_\pm\Omega_\pm/\pi=b_Y$, and
\begin{equation}
 \begin{aligned}
  (2T_\pm S_\pm)^2+
  \left(\frac{S_\pm\Omega_\pm}{\pi}\right)^2
  &=\ell_Y^2+b_Y^2\\
  &=m_Y^2.
 \end{aligned}
 \label{eq:KN-two-horizon-quadratic}
\end{equation}
With $Y=X_0$, the geometric result is
$m_Y^2=M^2-X_0=\mu_{\rm LR}^2$. We write \(b_0,\ell_0\) correspond to the values of \(b_Y,\ell_Y\) evaluated at \(Y=X_0\) in Table~\ref{tab:two-horizon-realization}.

\subsubsection{Balanced Kerr--Newman}

The balanced embedding has $\Sigma=0$. Its charge map fixes the
electromagnetic term in Eq.~\eqref{eq:mu-LR-general} to
$Y=X_{\rm bal}^{\rm dy}$, so the asymptotic data give
$\mu_{\rm LR}^2=M^2-Y$.
Independently, its unchanged Kerr--Newman geometry gives
Eqs.~\eqref{eq:KN-two-horizon-definitions}--\eqref{eq:KN-two-horizon-quadratic}
with this value of $Y$. The horizon products are
$2T_\pm S_\pm=\pm\ell_Y$ and $S_\pm\Omega_\pm/\pi=b_Y$;
their norm is 
\begin{equation}
\ell_Y^2+b_Y^2=M^2-Y=\mu_{\rm LR}^2.
\end{equation}
The two Kerr--Newman embeddings share the same horizon
geometry and therefore the same geometric derivation of
the horizon relations. Their contributions to
$\mu_{\rm LR}$, however, are different because the
electromagnetic charges are embedded into the two Maxwell
fields through different charge maps.

\subsubsection{Rasheed--Larsen direct--dual embedding}

The charge norm \eqref{eq:hodge-dressed-norm} and the asymptotic identity \eqref{eq:hodge-mu} give $\mu_{\rm LR}^2=m^2$ at arbitrary scalar
boundary value. To evaluate the two horizons independently, define
\begin{equation}
 \begin{aligned}
  \alpha&=\sqrt{\mathscr P\mathscr Q},\qquad
  d=\sqrt{m^2-b^2},\\
  k&=\frac{\mathscr P\mathscr Q+4m^2}
  {2m(\mathscr P+\mathscr Q)}.
 \end{aligned}
 \label{eq:RL-two-horizon-definitions}
\end{equation}
The seed charge formulas and Eq.~\eqref{eq:rasheed-intrinsic-definitions}
give
\begin{equation}
 \mathcal F=\frac{\alpha^2m^2}{4},\qquad
 \mathcal F+A-J^2=\frac{\alpha^2k^2d^2}{4}.
 \label{eq:RL-two-horizon-radicals}
\end{equation}
The outer entropy and the signed inner combination are
\begin{equation}
 \begin{aligned}
  S_+&=\pi\alpha(m+kd),&
  \overline S_-&=\pi\alpha(m-kd),\\
  S_-&=|\overline S_-|.
 \end{aligned}
 \label{eq:RL-two-horizon-entropies}
\end{equation}
Define $\varepsilon_{\rm RL}=\sgn(m-kd)$.
Evaluating at the two radial roots and using positive $S_-$ gives
\begin{equation}
 \begin{aligned}
  T_+&=\frac{d}{2\pi\alpha(m+kd)},
  &\Omega_+&=\frac{b}{\alpha(m+kd)},\\
  T_-&=-\frac{d}{2\pi\alpha|m-kd|},
  &\Omega_-&=\frac{b}{\alpha(m-kd)}.
 \end{aligned}
 \label{eq:RL-two-horizon-responses}
\end{equation}
The absolute value keeps the inner temperature negative, while the sign
of $m-kd$ remains in the inner angular velocity. Thus
$2T_\pm S_\pm=\pm d$, $S_+\Omega_+/\pi=b$, and
$S_-\Omega_-/\pi=\varepsilon_{\rm RL}b$. Consequently,
\begin{equation}
 \begin{aligned}
  (2T_\pm S_\pm)^2+
  \left(\frac{S_\pm\Omega_\pm}{\pi}\right)^2
  &=d^2+b^2\\
  &=m^2=\mu_{\rm LR}^2.
 \end{aligned}
 \label{eq:RL-two-horizon-quadratic}
\end{equation}
The sign can be written in physical charges and angular momentum because
\begin{equation}
 J^2-A=\frac{\alpha^2}{4}(m^2-k^2d^2)
 =\frac{\alpha^2}{4}(m-kd)(m+kd),
 \label{eq:RL-two-horizon-sign-proof}
\end{equation}
with $m+kd>0$. Therefore $\varepsilon_{\rm RL}=+1$ for $J^2>A$ and
$\varepsilon_{\rm RL}=-1$ for $J^2<A$.
The metric calculation in Appendix~\ref{app:inner-geometry} explains the
absolute value in the entropy and the retained sign in the angular
velocity, and verifies regularity at the allowed inner roots.

\subsection{Regular two-horizon domains, static and extremal limits}
\label{subsec:two-horizon-domains}

For the equal-coupling KK embedding, both areas are positive on
$m>0$, $0<|b|<m$. Charged states have $\delta>0$, while $\delta=0$
gives rotating Kerr. At $b=0$ the inner area vanishes; at $|b|=m$ the
roots coincide.

For both Kerr--Newman embeddings, with $Y=X_0$ or
$Y=X_{\rm bal}^{\rm dy}$ respectively, the strict two-horizon domain is

\begin{equation}
 \begin{gathered}
  M>0,\qquad Y\ge0,\qquad J^2<M^2(M^2-Y),\\
  (Y,J)\ne(0,0).
 \end{gathered}
 \label{eq:KN-two-horizon-domain}
\end{equation}
Charged states have $Y>0$. For $Y=0$, the last condition selects
rotating Kerr, while Schwarzschild has a zero-area inner root.
Equality in the spin bound merges the horizons. This includes the
static extremal RN boundary at $J=0$, $M^2=Y>0$, which belongs to the
finite-area outer-horizon extremal family.

For Rasheed--Larsen, the zero-inner-area threshold in seed parameters is
\begin{equation}
 \begin{aligned}
 & b_*^2=m^2\frac{(\mathscr P^2-4m^2)(\mathscr Q^2-4m^2)}
  {(\mathscr P\mathscr Q+4m^2)^2},\\
 &\operatorname{sgn}(J^2-A)=\operatorname{sgn}(b^2-b_*^2).
 \end{aligned}
 \label{eq:RL-two-horizon-threshold}
\end{equation}
For ${\mathsf p}{\mathsf q}\ne0$, one has $\mathscr P,\mathscr Q>2m$ and
\begin{equation}
 \frac{b_*^2}{m^2}
 =1-\frac{4m^2(\mathscr P+\mathscr Q)^2}
 {(\mathscr P\mathscr Q+4m^2)^2},
 \qquad 0<\frac{b_*^2}{m^2}<1.
\end{equation}
Thus the zero-inner-area surface lies inside the nonextremal
outer-horizon spin range. The strict two-horizon domain is $m>0$,
$|b|<m$, $\mathscr P,\mathscr Q\ge2m$, and $J^2\ne A$.
On a one-charge axis, $A=0$, so this condition excludes the static
inner root of zero area. A static dyonic state has $J^2<A$ and finite
inner area. Appendix~\ref{app:inner-geometry} proves that the retained
inner roots have positive metric and scalar factors. Because regularity of the outer horizon alone does
not guarantee a finite-area inner horizon.

Table~\ref{tab:two-horizon-realization} collects the products and domains
in the same order as the thermodynamic branches. Seed symbols are local
to each row. In the KK row, $m,b,\ell$ are defined by
Eq.~\eqref{eq:o2-parameters}. In the balanced row, the first two lines
below use $X=X_{\rm bal}^{\rm dy}$ and abbreviate
$(m_Y,b_Y,\ell_Y)$ as $(m,b,\ell)$. The next two lines define the
zero-coupling symbols; the last line uses the Rasheed--Larsen seed
parameters from Eq.~\eqref{eq:RL-two-horizon-definitions}. Thus the
symbols $m,b$ in the KK, balanced, and Rasheed--Larsen rows refer to
their respective seeds:

\begin{equation}
 \begin{gathered}
  m^2=M^2-X,\qquad b=J/M,\\
  \ell=\sqrt{m^2-b^2}\quad\text{(balanced)},\\
  b_0=J/M,\qquad
  \ell_0=\sqrt{M^2-X_0-b_0^2}\quad\text{(zero coupling)},\\
  d=\sqrt{m^2-b^2}\quad\text{(Rasheed--Larsen)}.
 \end{gathered}
 \label{eq:two-horizon-table-definitions}
\end{equation}

\begin{table*}[t]
\caption{
Domains with two distinct regular horizons of positive finite area.
In each ordered pair, the outer-horizon quantity is listed first.
The quantities $\ell$, $\ell_0$, and $d$ are positive, and the seed
parameters have the row-specific meanings defined in the text.
In the equal-coupling KK row, $\delta>0$ gives charged states,
whereas $\delta=0$ gives nonextremal rotating Kerr because $b\ne0$.
In both Kerr--Newman rows, zero-charge states with nonzero angular
momentum are retained. The Schwarzschild limit is excluded because
it has no regular inner horizon.
The geometric horizon relations follow from the corresponding
seed solutions
\cite{Rasheed1995,Larsen2000,ChowCompere2014,CveticEtAl2018}.
}
\label{tab:two-horizon-realization}
\centering\footnotesize
\renewcommand{\arraystretch}{1.22}
\setlength{\tabcolsep}{3pt}
\newcommand{\horizoncell}[2]{\parbox[t]{#1}{\raggedright #2\par}}
\begin{tabular}{@{}llllll@{}}
\toprule
\horizoncell{0.13\textwidth}{Embedding} & \horizoncell{0.265\textwidth}{Strict two-horizon domain} & \horizoncell{0.10\textwidth}{$2T_\pm S_\pm$} & \horizoncell{0.12\textwidth}{$S_\pm\Omega_\pm/\pi$} & \horizoncell{0.10\textwidth}{$\varepsilon$} & \horizoncell{0.17\textwidth}{Excluded limits} \\
\midrule
\horizoncell{0.13\textwidth}{Equal-coupling KK} & \horizoncell{0.265\textwidth}{$m>0$, $0<|b|<m$, $\delta\ge0$} & \horizoncell{0.10\textwidth}{$(\ell,-\ell)$} & \horizoncell{0.12\textwidth}{$(b,b)$} & \horizoncell{0.10\textwidth}{$+1$} & \horizoncell{0.17\textwidth}{$b=0$: $S_-=0$; $|b|=m$: coincident roots} \\
\addlinespace[2pt]
\horizoncell{0.13\textwidth}{Zero-coupling Kerr--Newman} & \horizoncell{0.265\textwidth}{$M>0$, $X_0\ge0$, $J^2<M^2(M^2-X_0)$, $(X_0,J)\ne(0,0)$} & \horizoncell{0.10\textwidth}{$(\ell_0,-\ell_0)$} & \horizoncell{0.12\textwidth}{$(b_0,b_0)$} & \horizoncell{0.10\textwidth}{$+1$} & \horizoncell{0.17\textwidth}{$X_0=J=0$: $S_-=0$; $\ell_0=0$: extremal merger} \\
\addlinespace[2pt]
\horizoncell{0.13\textwidth}{Balanced Kerr--Newman} & \horizoncell{0.265\textwidth}{$M>0$, $X:=X_{\rm bal}^{\rm dy}\ge0$, $J^2<M^2(M^2-X)$, $(X,J)\ne(0,0)$} & \horizoncell{0.10\textwidth}{$(\ell,-\ell)$} & \horizoncell{0.12\textwidth}{$(b,b)$} & \horizoncell{0.10\textwidth}{$+1$} & \horizoncell{0.17\textwidth}{$X=J=0$: $S_-=0$; $\ell=0$: extremal merger} \\
\addlinespace[2pt]
\horizoncell{0.13\textwidth}{Rasheed--Larsen direct--dual embedding} & \horizoncell{0.265\textwidth}{$m>0$, $|b|<m$, $\mathscr P,\mathscr Q\ge2m$, $J^2\ne A$} & \horizoncell{0.10\textwidth}{$(d,-d)$} & \horizoncell{0.12\textwidth}{$(b,\varepsilon b)$} & \horizoncell{0.10\textwidth}{$\sgn(J^2-A)$} & \horizoncell{0.17\textwidth}{$J^2=A$: $S_-=0$; extremal limits: coincident roots} \\
\bottomrule
\end{tabular}
\end{table*}

The fast and slow Rasheed--Larsen extremal limits have coincident radial
roots and lie on the boundary of the distinct-horizon domain. Static
limits depend on the embedding: the equal-coupling KK inner area
vanishes, whereas nonextremal charged static Kerr--Newman and static
dyonic Rasheed--Larsen states retain positive finite inner areas. The
finite-area static extremal RN solution remains part of the complete
outer-horizon thermodynamics.

As a static normalization check, the purely electric relation of
Ref.~\cite{LuWuWuZhao2026Toda}, expressed in our scalar-charge convention,
is
\begin{equation}
 M^2+\Sigma^2-4(Q_1^2+Q_2^2)=(2TS)^2
 \quad (\phi_0=0,\ J=0).
 \label{eq:toda-static-normalization}
\end{equation}
On the balanced static loci this is the $J=0$ limit of
Eq.~\eqref{eq:unified-rotating-identity}.

\subsection{Physical meaning of the common norm and rotation signs}
\label{subsec:two-horizon-meaning}

The common relation ties an asymptotic combination of mass, scalar
charge, and electromagnetic charges to two horizon scales.
Since $T=\kappa/(2\pi)$ and $S=A_H/4$, these scales are
$2TS=\kappa A_H/(4\pi)$ and $S\Omega_H/\pi=A_H\Omega_H/(4\pi)$.
Their common norm can be read from either regular horizon of the same
black-hole state. The value can change when the state changes.
For two identical copies of a black-hole state in the same
scalar background, $\mu_{\rm LR}^2$ determines the leading,
velocity-independent force at large separation. This
monopole term depends only on the mass, scalar charge, and
electric and magnetic charges. Explicit spin couplings and
higher-multipole interactions lie beyond this leading term.
For two different black holes, the force instead depends on
products of their respective masses and charges, so their
individual values of $\mu_{\rm LR}$ are not sufficient
\cite{YangLuLu2026PRL,Heidenreich2020,CremoniniEtAl2022LongRange}.

The embeddings also show why the charge map is part of the thermodynamic
description. The same seed geometry can support different allowed
charge variations and different organizations of the electromagnetic
work term. The balanced charge plane and the Rasheed--Larsen
direct--dual charge surface provide explicit examples. The horizon norm
records their common geometric structure; the charge constraints and
potential maps determine which changes of state are allowed and what
work they require.

At a finite-area extremal boundary the outer relation reduces to
\begin{equation}
 \mu_{\rm LR}^2=\left(\frac{S\Omega_H}{\pi}\right)^2,
 \qquad T=0,\quad 0<S<\infty.
 \label{eq:two-horizon-extremal-meaning}
\end{equation}
Thus rotation prevents $\mu_{\rm LR}$ from being a universal measure of
departure from extremality. The Rasheed--Larsen fast extremal branch in
Eq.~\eqref{eq:rasheed-fast} has $\Omega_H\ne0$ and $\mu_{\rm LR}>0$.
The slow extremal branch in Eq.~\eqref{eq:rasheed-slow-1} can have
$J\ne0$ while $\Omega_H=0$ and $\mu_{\rm LR}=0$.
The common relation accommodates both extremal behaviors.

Squaring the products removes the relative rotation sign
$\varepsilon$. In the Rasheed--Larsen family that sign distinguishes
the two sides of $J^2=A$ and contains inner-horizon information absent
from the norm. Equation~\eqref{eq:two-horizon-sign} therefore remains
part of the result alongside Eq.~\eqref{eq:two-horizon-quadratic}.

\section{Conclusions}
\label{sec:conclusions}

We have obtained complete outer-horizon thermodynamics for four
exact rotating EMMD embeddings of known seeds: equal-coupling KK,
zero-coupling Kerr--Newman, balanced Kerr--Newman, and
Rasheed--Larsen direct--dual families. Within the constant-coefficient
seed/dual construction, we determine the maps that
preserve the seed metric and scalar profile and verify the target
field equations. For each family, we specify the independent
electromagnetic charges and their constraints. On the stated regular
domains, the thermodynamic quantities satisfy the first law for
allowed variations, together with the Smarr and scalar-shift
identities. The electromagnetic work includes the required magnetic
patch contributions whenever magnetic charge is present.
Single-valued mass functions on the nonextremal outer branches
follow from the Kerr--Newman mass relations, the unique physical
root of the KK cubic, and the strict monotonicity of $\mathcal F(M)$
along the physical Rasheed--Larsen scalar root.

We calculate $\mu_{\rm LR}$ from the asymptotic charges and
evaluate the horizon products separately from the geometry.
This verifies the known seed relation
\cite{ChowCompere2014,CveticEtAl2018} in the charge normalization
of each EMMD embedding:
\begin{equation*}
 \mu_{\rm LR}^2
 =(2T_\pm S_\pm)^2
 +\left(\frac{S_\pm\Omega_\pm}{\pi}\right)^2.
\end{equation*}
The comparison of both horizons applies to the domains in
Table~\ref{tab:two-horizon-realization}, where the horizons are
distinct, regular, and have positive finite area.
With $S_\pm>0$ and the future-inner temperature convention
$T_-<0$, direct calculations also give
\begin{equation*}
 T_-S_-=-T_+S_+,
 \qquad
 S_-\Omega_-=\varepsilon S_+\Omega_+.
\end{equation*}
Here $\varepsilon=+1$ for the equal-coupling KK and the two
Kerr--Newman embeddings, whereas
$\varepsilon=\operatorname{sgn}(J^2-A)$ for the Rasheed--Larsen
family. Within its nonextremal outer branch, $J^2=A$ marks
zero inner area and is excluded from the two-horizon comparison.

The outer-horizon relation remains valid at the stated regular
finite-area extremal limits. On these boundaries, the first law
is understood for variations within the chosen extremal branch.
The Rasheed--Larsen fast extremal branch has
$T_+=0$, $\Omega_+\ne0$, and $\mu_{\rm LR}>0$.
The slow extremal branch can instead have $J\ne0$ while
$T_+=\Omega_+=0$ and $\mu_{\rm LR}=0$.
Thus $\mu_{\rm LR}$ need not vanish at extremality and cannot
serve as a universal measure of departure from extremality.

The charge maps show why a common seed geometry does not fix
the thermodynamic state space of its EMMD embeddings.
The equal-coupling KK and zero-coupling Kerr--Newman families
have two independent electric charges even though their Maxwell
fields are proportional. The Rasheed--Larsen family generally
has two nonproportional fields in the original electromagnetic
frame, but its four charges obey $Q_1Q_2+P_1P_2=0$.
Its first law therefore involves three independent charge
directions on the nonzero charge surface, and its purely electric
subset has at most one nonzero electric charge.
These examples distinguish the number of independent charges
from the number of independent field profiles.
The common geometric horizon relation survives these different
charge realizations, while the allowed state variations and
electromagnetic work must be determined from the actual field
maps and gauge patches.

\begin{acknowledgments}
This work is supported by the National Natural Science Foundation of China
(NSFC) under Grant Nos.~12373022 and U1731107, the Seventh Batch of High Level
Innovative Talents of Guizhou Province under Grant No.~GCC[2023]011, and the
Special Funds for Discipline Construction and Postgraduate Education under
Grant No.~JX-2020-02.

\end{acknowledgments}

\appendix

\section{Gauge patches and horizon regularity}
\label{app:regularity}

\subsection{Gauge patches and the extremal boundary}
\label{app:gauge-patches}

The Kerr--Newman line element associated with Eq.~\eqref{eq:rho-delta} is

\begin{align}
 \dd s^2={}&-\frac{\Delta-b^2\sin^2\theta}{\rho^2}\dd t^2
 -\frac{2b(2Mr-\mathcal Z^2)\sin^2\theta}{\rho^2}
 \dd t\,\dd\varphi\notag\\
 &+\frac{\rho^2}{\Delta}\dd r^2+\rho^2\dd\theta^2\notag\\
 &+\frac{\sin^2\theta}{\rho^2}
 \left[(r^2+b^2)^2-b^2\Delta\sin^2\theta\right]
 \dd\varphi^2.
 \label{eq:KN-metric}
\end{align}
Introduce ingoing coordinates

\begin{equation}
 \dd v=\dd t+\frac{r^2+b^2}{\Delta}\dd r,
 \qquad
 \dd\widetilde\varphi=\dd\varphi+\frac{b}{\Delta}\dd r.
 \label{eq:ingoing}
\end{equation}
After substitution into Eq.~\eqref{eq:KN-metric}, all inverse powers of
$\Delta$ cancel. The radial terms produced in the gauge potential are
locally exact and can be removed separately in the two patches. This gives

\begin{align}
 A_{\rm in}^{(N/S)}={}&-\frac{4\mathfrak q r}{\rho^2}
 \left(\dd v-b\sin^2\theta\,\dd\widetilde\varphi\right)\notag\\
 &+\frac{4\mathfrak p\cos\theta}{\rho^2}
 \left[b\,\dd v-(r^2+b^2)\dd\widetilde\varphi\right]
 \mathbin{\pm}4\mathfrak p\,\dd\widetilde\varphi .
 \label{eq:ingoing-potential}
\end{align}
Each potential is smooth at $\Delta=0$ and on its own rotation axis.
Their difference is
$A_{\rm in}^{(N)}-A_{\rm in}^{(S)}=8\mathfrak p\,\dd\widetilde\varphi$.
The target rescalings in Eq.~\eqref{eq:balanced-difference} give
$A_i^{(N)}-A_i^{(S)}=8P_i\,\dd\widetilde\varphi$.

The full finite-area extremal boundary is specified directly by
$M^2=b^2+\mathcal Z^2>0$. This includes rotating extremal black holes and
static extremal RN with $b=0$, $M=\mathcal Z>0$. On this boundary,

\begin{equation}
 \Delta=(r-M)^2,
 \qquad
 \rho_H^2=M^2+b^2\cos^2\theta>0.
\end{equation}

Since $r_H=M>0$, the horizon is separated from the Kerr ring singularity
at $r=0$, $\theta=\pi/2$. Its area is $4\pi(M^2+b^2)$, reducing to
$4\pi\mathcal Z^2$ for static extremal RN.
The axes are regular for azimuthal period $2\pi$, since

\begin{equation}
 \lim_{\theta\to0}
 \frac{g_{\varphi\varphi}}{\theta^2g_{\theta\theta}}=1,
 \qquad
 \lim_{\theta\to\pi}
 \frac{g_{\varphi\varphi}}{(\pi-\theta)^2g_{\theta\theta}}=1,
\end{equation}
and $g_{t\varphi}=O(\sin^2\theta)$.
The regular potentials satisfy
$A_{i\widetilde\varphi}^{(N)}=O(\sin^2\theta)$ near the north axis and
$A_{i\widetilde\varphi}^{(S)}=O(\sin^2\theta)$ near the south axis.
The gauge fields are therefore regular on their respective patches.
The limiting horizon is smooth, connected, and has spatial topology $S^2$.

\subsection{Seed geometry at the inner root}
\label{app:inner-geometry}

The entropies in Eq.~\eqref{eq:RL-two-horizon-entropies} can be directly verified using the geometry of the seed solution. In Larsen's notation\cite{Larsen2000}, with the auxiliary parameters renamed as
$(\mathscr P,\mathscr Q,b)$, the four-dimensional metric is
\begin{equation}
 \begin{aligned}
 \dd s^2={}&-\frac{H_3}{W}(\dd t+B_\varphi\dd\varphi)^2\\
 &+W\left(\frac{\dd r^2}{\Delta}+\dd\theta^2
 +\frac{\Delta}{H_3}\sin^2\theta\,\dd\varphi^2\right),
 \end{aligned}
\end{equation}
where $W=\sqrt{H_1H_2}$ and $H_3=\Delta-b^2\sin^2\theta$.
Set $\Lambda_\pm=\alpha(m\pm kd)$, with $\alpha,k,d$ as in
Eq.~\eqref{eq:RL-two-horizon-definitions}.
For $b\ne0$, the seed shift at the roots obeys
\begin{equation}
 \begin{gathered}
 B_\varphi(r_\pm)=-\frac{\Lambda_\pm}{b},\qquad
 \Omega_\pm=\frac{b}{\Lambda_\pm},\\
 \left.\sqrt{g_{\theta\theta}g_{\varphi\varphi}}\right|_{r_\pm}
 =|\Lambda_\pm|\sin\theta.
 \end{gathered}
\end{equation}
Thus $S_\pm=\pi|\Lambda_\pm|$. The area uses an absolute value, while
the denominator of $\Omega_-$ retains its sign.

The metric and scalar factors are positive at the retained inner root.
To see this, write
\begin{equation}
 \begin{aligned}
& u=\frac{\mathscr P}{2m}\ge1,\qquad
 v=\frac{\mathscr Q}{2m}\ge1,\qquad
 \beta=\frac{|b|}{m},\qquad \tau=\frac d m,\\
&\beta^2+\tau^2=1,\qquad
 k=\frac{uv+1}{u+v}\ge1,\qquad \zeta=\sqrt{k^2-1}.
 \end{aligned}
\end{equation}
Choose the polar orientation so that the linear term below is negative.
At $r_-=m-d$ and $z=\cos\theta$, the source functions give
\begin{equation*}
 \frac{H_1(r_-,z)}{m^2}
 =\beta^2z^2-2u\zeta\beta z+2u(k-\tau)-\beta^2.
\end{equation*}
For $H_2$, replace $u$ by $v$ and $z$ by $-z$.
If $\beta=0$, the retained states have both charges nonzero and $k>1$,
so both functions are positive. For $\beta>0$ and $u\zeta\ge\beta$,
the minimum of $H_1$ on $[-1,1]$ occurs at $z=1$. It is positive because
$\tau+\zeta\beta\le k$, with equality only at $\tau=1/k$,
the excluded surface $m=kd$.
If $u\zeta<\beta$, the minimum occurs at the vertex and is
\begin{equation}
 [1-\tau-u(k-1)]\,[u(k+1)-1-\tau]>0.
\end{equation}
The second factor is positive, and the first is positive since
\begin{equation}
 u(k-1)=\frac{u\zeta^2}{k+1}
 <\frac{\beta^2}{u(k+1)}
 \le\frac{\beta^2}{1+\tau}=1-\tau.
\end{equation}
The same argument applies to $H_2$. Moreover,
$\partial_rH_1=2r+\mathscr P-2m$ and
$\partial_rH_2=2r+\mathscr Q-2m$ are nonnegative for $r\ge r_-$.
Their positivity therefore extends from the inner root to the exterior.
At either axis,
\begin{equation}
 (H_1H_2)|_{r_-,\,z=\pm1}
 =\mathscr P\mathscr Q(m-kd)^2=\Lambda_-^2.
\end{equation}
This also gives the regular polar area element for azimuthal period $2\pi$.
Together with a simple radial root and horizon-penetrating seed coordinates,
these properties give a regular local horizon with a smooth scalar ratio.
At $b=0$, the retained dyonic solution has $H_1,H_2>0$ and the static
horizon follows directly from the same metric. The target maps preserve
this geometry and the regular seed fields, which supplies the local
regularity input for Table~\ref{tab:two-horizon-realization}.

\section{Charge-space geometry and global mass inversion}
\label{app:charge-space}

This appendix gives the detailed charge-space and mass-reconstruction
checks used in Sec.~\ref{sec:hodge}. We first show that the nonzero
charge configurations form a smooth three-dimensional surface and
that each configuration has two equivalent seed descriptions.
We then check the electromagnetic work, including the magnetic
patch data, and prove the uniqueness of the mass and scalar root.
The final subsection checks the reconstructed seed domain and
its boundary limits. These results apply to the family and physical
domains specified in the main text.

\subsection{Nonzero charge surface and equivalent seed parameters}
\label{app:hodge-charge-geometry}

Order the target charges as
\begin{equation}
 \mathbf C=(Q_1,P_1,Q_2,P_2).
 \label{eq:target-charges}
\end{equation}
At fixed scalar boundary value, the seed parameters are
$(\mathsf q,\mathsf p,\vartheta)$, with the two seed charges
not both zero and the angle defined modulo $2\pi$.
Their parameter space is
\begin{equation}
 S^1=\mathbb R/(2\pi\mathbb Z),\quad
 \mathcal D^\times=(\mathbb R^2\setminus\{(0,0)\})\times S^1.
 \label{eq:seed-parameters}
\end{equation}
With $c=\cos\vartheta$ and $s=\sin\vartheta$, the charge map is
\begin{equation}
 \Pi(\mathsf q,\mathsf p,\vartheta)
 =(c\mathsf q,c\mathsf p,-s\mathsf p,s\mathsf q).
 \label{eq:hodge-charge-map}
\end{equation}
Define $\mathcal C_{\rm ch}(\mathbf C)=Q_1Q_2+P_1P_2$.
For $\mathbf C=\Pi(\mathsf q,\mathsf p,\vartheta)$,
\begin{equation}
 \mathcal C_{\rm ch}(\mathbf C)=0,\quad
 \|\mathbf C\|^2=\mathsf q^2+\mathsf p^2>0.
 \label{eq:hodge-charge-map-identities}
\end{equation}
Thus the image lies in the nonzero constrained charge set
\begin{equation}
 \mathcal N^\times
 :=\{\mathbf C\in\mathbb R^4\setminus\{0\}:
 \mathcal C_{\rm ch}(\mathbf C)=0\}.
 \label{eq:hodge-charge-surface}
\end{equation}

Differentiating the map gives the three charge variations
\begin{equation}
 \begin{aligned}
 &\partial_{\mathsf q}\mathbf C=(c,0,0,s),\quad
 \partial_{\mathsf p}\mathbf C=(0,c,-s,0),\\
 &\partial_\vartheta\mathbf C
 =(-s\mathsf q,-s\mathsf p,-c\mathsf p,c\mathsf q).
 \end{aligned}
 \label{eq:hodge-charge-tangents}
\end{equation}
Using the ordinary Euclidean inner product on the four charge
components, these vectors are mutually orthogonal. Their scalar
products form the matrix
\begin{equation}
 G_{\mathbf C}=\operatorname{diag}(1,1,\mathsf q^2+\mathsf p^2).
 \label{eq:hodge-charge-gram}
\end{equation}
All three diagonal entries are positive on $\mathcal D^\times$,
so the charge map has rank three. In particular, changing the
angle gives a charge variation independent of those obtained
by changing either seed charge.

The constraint gradient is
\begin{equation}
 \nabla\mathcal C_{\rm ch}=(Q_2,P_2,Q_1,P_1),\quad
 \|\nabla\mathcal C_{\rm ch}\|^2=\|\mathbf C\|^2.
 \label{eq:hodge-constraint-gradient}
\end{equation}
It is nonzero everywhere on $\mathcal N^\times$.
The constraint can therefore be solved locally for one charge
in terms of the other three, so $\mathcal N^\times$ is smooth
and three-dimensional. Together with the rank calculation,
this shows that the seed charges and distribution angle give
local coordinates after choosing one seed description.

To recover seed parameters from any nonzero target configuration,
arrange the charges in the matrix
\begin{equation}
 B:=\begin{pmatrix}Q_1&P_2\\P_1&-Q_2\end{pmatrix}
 =\begin{pmatrix}\mathsf q\\\mathsf p\end{pmatrix}
 \begin{pmatrix}c&s\end{pmatrix},\quad
 \det B=-\mathcal C_{\rm ch}.
 \label{eq:hodge-rank-one-factorization}
\end{equation}
A nonzero matrix with zero determinant has rank one. It can
therefore be written as a nonzero column times a nonzero row.
Normalize the row to unit length and absorb its original length
into the column. The row becomes $(\cos\vartheta,\sin\vartheta)$,
and the column gives $(\mathsf q,\mathsf p)^{\mathsf T}$.
This proves that every point of $\mathcal N^\times$ is reached.
The normalized factorization is unique up to reversing both
factors. Thus the only two seed descriptions are those related
by Eq.~\eqref{eq:hodge-double-cover}.

These choices vary smoothly near each nonzero target configuration.
Choose a row of $B$ that is nonzero at the point under consideration.
It remains nonzero in a sufficiently small neighborhood.
Normalizing it gives a smooth local choice of $(c,s)$, and
\begin{equation}
 \begin{pmatrix}\mathsf q\\\mathsf p\end{pmatrix}
 =B\begin{pmatrix}c\\s\end{pmatrix}.
 \label{eq:hodge-local-seed-reconstruction}
\end{equation}
Reversing both factors gives the second choice. Each sufficiently
small target neighborhood therefore has two separate smooth
seed descriptions. This is the two-sheeted covering property.
Identifying the two descriptions gives
\begin{equation}
 \mathcal D^\times/\mathbb Z_2\simeq\mathcal N^\times.
 \label{eq:hodge-charge-quotient}
\end{equation}
The identification removes a discrete duplication without
reducing the number of continuous charge parameters.

If the map is extended to $\mathsf q=\mathsf p=0$, every angle
gives $\mathbf C=0$. The angle derivative then vanishes, whereas
the two charge derivatives remain independent, so the extended
map has rank two there. The zero-charge point is excluded from
the smooth surface considered above. Individual nonzero charge
axes remain regular points because the constraint gradient is
nonzero there. The two-description result also remains valid
on these axes. Black-hole regularity imposes the additional
conditions stated in the main text.

\subsection{Electromagnetic work and magnetic gauge patches}
\label{app:hodge-work-proof}

Substitute the charge map and the potential map
\eqref{eq:hodge-potential-map} into the electromagnetic work.
For all variations generated by the seed charges and angle,
\begin{equation}
 \begin{aligned}
 \sum_i(\Phi_i\dd Q_i+\Psi_i\dd P_i)
 &=(c^2+s^2)(\Phi\dd\mathsf q+\Psi\dd\mathsf p)\\
 &\quad+(c\dd c+s\dd s)(\mathsf q\Phi+\mathsf p\Psi).
 \end{aligned}
 \label{eq:hodge-work-expansion-app}
\end{equation}
Since $c^2+s^2=1$ and $c\dd c+s\dd s=0$, this is precisely
Eq.~\eqref{eq:hodge-work-map}. All angle terms cancel.
Reversing both seed charges reverses both seed potentials in
the chosen reference gauge. Together with the sign change of
$c,s$, this leaves the target potentials unchanged. The two
seed descriptions therefore give the same work on the target
charge surface.

The seed work includes the magnetic overlap and axis-boundary
contributions. To track their mapping, let $\mathcal A$ and
$\widetilde{\mathcal A}$ be local seed gauge potentials, with
\begin{equation}
 \dd\mathcal A=H,\qquad
 \dd\widetilde{\mathcal A}=-\e^{a\phi}\star H.
 \label{eq:hodge-seed-local-potentials}
\end{equation}
The target potentials can be chosen locally as
\begin{equation}
 \begin{aligned}
 A_1&=c\mathcal A,&\widetilde A_1&=c\widetilde{\mathcal A},\\
 A_2&=-s\widetilde{\mathcal A},&\widetilde A_2&=s\mathcal A.
 \end{aligned}
 \label{eq:hodge-local-potential-map}
\end{equation}
The coefficients are constant on spacetime.
If $\lambda$ and $\widetilde\lambda$ are the seed north--south
transition functions, the target transition functions are
\begin{equation}
 \begin{aligned}
 \lambda_1&=c\lambda,&\widetilde\lambda_1&=c\widetilde\lambda,\\
 \lambda_2&=-s\widetilde\lambda,&\widetilde\lambda_2&=s\lambda.
 \end{aligned}
 \label{eq:hodge-patch-transition-map}
\end{equation}
Thus the same field map carries the gauge-patch data used in
the seed first law. Equation~\eqref{eq:hodge-work-map} uses
the complete seed and target work, including these patch
contributions, rather than a potential defined in only one
patch. This step relies on the seed work being defined with
the required magnetic patch prescription.

The work is evaluated only for variations preserving the charge
constraint. Its differential is
\begin{equation}
 \dd\mathcal C_{\rm ch}
 =Q_2\dd Q_1+P_2\dd P_1+Q_1\dd Q_2+P_1\dd P_2.
 \label{eq:hodge-charge-constraint-covector}
\end{equation}
It vanishes on every allowed tangent variation, but not on an
arbitrary variation in the four-charge coordinate space.
For any smooth function $f$, consider changing the four work
coefficients by
\begin{equation}
 \begin{aligned}
 \Phi_1'&=\Phi_1+fQ_2,&\Psi_1'&=\Psi_1+fP_2,\\
 \Phi_2'&=\Phi_2+fQ_1,&\Psi_2'&=\Psi_2+fP_1.
 \end{aligned}
 \label{eq:hodge-normal-potential-ambiguity}
\end{equation}
Here $f$ is unrelated to the gauge transition functions above.
Writing $\Delta_f$ for the change induced by this replacement,
we find
\begin{equation}
 \Delta_f\!\left[\sum_i(\Phi_i\dd Q_i+\Psi_i\dd P_i)\right]
 =f\dd\mathcal C_{\rm ch}=0,
 \label{eq:hodge-normal-work-invariance}
\end{equation}
\begin{equation}
 \Delta_f\!\left[\sum_i(\Phi_iQ_i+\Psi_iP_i)\right]
 =2f\mathcal C_{\rm ch}=0,
 \label{eq:hodge-normal-Smarr-invariance}
\end{equation}
\begin{equation}
 \begin{aligned}
 \Delta_f\!\left[\sum_i a_i(Q_i\Phi_i-P_i\Psi_i)\right]
 &=f(a_1+a_2)\\
 &\quad\times(Q_1Q_2-P_1P_2)=0.
 \end{aligned}
 \label{eq:hodge-normal-invariances}
\end{equation}
The first equality uses the allowed-variation condition, the
second uses the charge constraint, and the third uses $a_1+a_2=0$.
The fields, asymptotic reference, and magnetic gauge patches
select the potentials in Eq.~\eqref{eq:hodge-potential-map}.
The replacement above changes only how the work coefficients
are extended to variations that leave the charge surface.
It is not a Maxwell gauge transformation.

\subsection{Unique scalar root and mass function}
\label{app:rl-mass-proof}

We now establish the mass reconstruction stated in
Sec.~\ref{subsec:rl-unique-mass}. All seed charges in this
subsection use $\phi_0=0$. For an arbitrary scalar boundary
value, the same argument applies after replacing them with
the weighted charges in
Eq.~\eqref{eq:hodge-dressed-seed-squares}.

We first determine the physical scalar charge at fixed real
seed charges $(\mathsf q,\mathsf p)$ and mass $M>M_0$.
We then show that $\mathcal F(M)$ increases smoothly from
zero to infinity along this scalar-root branch. This
monotonicity, together with the outer-horizon conditions
 in Appendix~\ref{app:rl-reconstruction}, determines
the mass uniquely from the entropy, angular momentum,
and charges.

Move all terms of Eq.~\eqref{eq:rasheed-rational} to the
left and define
\begin{equation}
 \mathcal R(\Sigma)
 =\frac{4\mathsf q^2}{\Sigma+\sqrt3M}
 +\frac{4\mathsf p^2}{\Sigma-\sqrt3M}
 -\frac{2\Sigma}{3}.
 \label{eq:rasheed-root-function}
\end{equation}
The physical scalar charge is a zero of $\mathcal R$ in
the interval $-\sqrt3M<\Sigma<\sqrt3M$. At fixed mass and
charges,
\begin{equation}
 \mathcal R'(\Sigma)
 =-\frac{4\mathsf q^2}{(\Sigma+\sqrt3M)^2}
 -\frac{4\mathsf p^2}{(\Sigma-\sqrt3M)^2}
 -\frac23<0.
 \label{eq:rasheed-root-monotonicity}
\end{equation}
Thus $\mathcal R$ decreases strictly throughout the
physical interval.

When both seed charges are nonzero, $\mathcal R$ tends
to $+\infty$ as $\Sigma\to(-\sqrt3M)^+$ and to $-\infty$
as $\Sigma\to(\sqrt3M)^-$. Continuity and strict
monotonicity therefore give exactly one physical root.

When exactly one seed charge vanishes, one endpoint
limit remains finite:
\begin{equation}
 \begin{aligned}
 \lim_{\Sigma\to(-\sqrt3M)^+}\mathcal R(\Sigma)
 &=\frac{2(M^2-\mathsf p^2)}{\sqrt3M}>0
 &&(\mathsf q=0),\\
 \lim_{\Sigma\to(\sqrt3M)^-}\mathcal R(\Sigma)
 &=\frac{2(\mathsf q^2-M^2)}{\sqrt3M}<0
 &&(\mathsf p=0).
 \end{aligned}
 \label{eq:rasheed-axis-root-limits}
\end{equation}
These signs follow from $M>M_0$: on the electric axis
$M_0=|\mathsf q|$, while on the magnetic axis
$M_0=|\mathsf p|$. The other endpoint diverges with the
opposite sign, so the physical root is unique on each
nonzero one-charge axis as well. When both charges
vanish, $\mathcal R=-2\Sigma/3$ and the unique root is
$\Sigma=0$.

Since $\mathcal R'$ never vanishes, the implicit function
theorem ensures that the physical scalar root varies
smoothly with $M$ for $M>M_0$. Write this root as
$\Sigma=\sqrt3\sigma(M)$, with $-M<\sigma(M)<M$, and set
\begin{equation}
 x=M+\sigma>0,\qquad y=M-\sigma>0.
 \label{eq:rasheed-positive-variables}
\end{equation}
The scalar equation and the definition of $V$ give
\begin{equation}
 \mathsf q^2=\frac{x(x^2-V)}{4(x+y)},\qquad
 \mathsf p^2=\frac{y(y^2-V)}{4(x+y)}.
 \label{eq:rasheed-charge-factorization}
\end{equation}
These expressions relate the sign of $V$ to the
charge-dependent mass threshold $M_0$.

For $V>0$, they imply
\begin{equation}
 \begin{aligned}
 |\mathsf q|^{2/3}
 &<\frac{x}{[4(x+y)]^{1/3}},\qquad
 |\mathsf p|^{2/3}
 <\frac{y}{[4(x+y)]^{1/3}},\\
 |\mathsf q|^{2/3}+|\mathsf p|^{2/3}
 &<\frac{x+y}{[4(x+y)]^{1/3}}
 =M^{2/3}.
 \end{aligned}
 \label{eq:rasheed-mass-bound-inequality}
\end{equation}
For $V<0$, both individual inequalities and their sum
reverse. For $V=0$ with $x,y>0$, they become equalities.
Using
$M_0=(|\mathsf q|^{2/3}+|\mathsf p|^{2/3})^{3/2}$,
we therefore obtain
\begin{equation}
 V>0\quad\Longleftrightarrow\quad M>M_0
 \label{eq:rasheed-mass-bound}
\end{equation}
on the scalar-root branch with $x,y>0$.
The one-charge and neutral endpoints require separate
limits because $x$ or $y$ can vanish there. 

We next determine how $\mathcal F$ changes with $M$
while holding the seed charges fixed. Rewrite the scalar
equation as
\begin{equation}
 G(M,\sigma)
 =\sigma-\frac{2\mathsf q^2}{M+\sigma}
 +\frac{2\mathsf p^2}{M-\sigma}=0.
 \label{eq:rasheed-rescaled-scalar-equation}
\end{equation}
Its derivative with respect to $\sigma$ at fixed $M$
is positive:
\begin{equation}
 G_\sigma
 =1+\frac{2\mathsf q^2}{(M+\sigma)^2}
 +\frac{2\mathsf p^2}{(M-\sigma)^2}>0.
 \label{eq:rasheed-G-positive}
\end{equation}
Using Eq.~\eqref{eq:rasheed-charge-factorization} on the
physical root gives
\begin{equation}
 G_\sigma=\frac{3U-V}{2U},\qquad
 \frac{\dd\sigma}{\dd M}
 =-\frac{\sigma(U+V)}{M(3U-V)}.
 \label{eq:rasheed-sigma-derivative}
\end{equation}
Differentiating $\mathcal F=UV$ along this root then yields
\begin{equation}
 \frac{\dd\mathcal F}{\dd M}
 =\frac{2U(U+V)}{M}>0.
 \label{eq:rasheed-F-monotonicity}
\end{equation}
Here $M>0$, $U=xy>0$, and $V>0$. Hence $\mathcal F(M)$
is strictly increasing throughout the physical
nonextremal branch.

The endpoint analysis in
Appendix~\ref{app:rl-reconstruction} gives
$\mathcal F\to0$ as $M\downarrow M_0$.
At large $M$, the rational scalar equation gives
$\sigma=O(M^{-1})$, so
$\mathcal F\sim M^4\to\infty$.
Together with the positive derivative, these limits
show that $\mathcal F(M)$ maps $(M_0,\infty)$ smoothly
and one-to-one onto $(0,\infty)$.

Thus every positive value of $\mathcal F$ selects
exactly one mass $M>M_0$ at fixed seed charges.
Appendix~\ref{app:rl-reconstruction} completes the
reconstruction by showing that the entropy and angular
momentum select a regular nonextremal outer horizon
precisely when Eq.~\eqref{eq:rasheed-intrinsic-domain}
holds. For these data,
Eq.~\eqref{eq:rasheed-intrinsic-mass} therefore determines
a unique physical mass. The argument includes the
nonextremal one-charge axes, and the neutral limit
reproduces Kerr.

\subsection{Seed reconstruction and boundary limits}
\label{app:rl-reconstruction}

Equation~\eqref{eq:rasheed-charge-factorization}, together with
$x,y>0$ and $V>0$, gives
\begin{equation}
 x^2\ge V,\qquad y^2\ge V.
 \label{eq:rasheed-reconstruction-inequalities}
\end{equation}
Equality holds in the first relation on $\mathsf q=0$ and in
the second on $\mathsf p=0$.
With $m=\sqrt V$, $\mathscr Q=2x$, and $\mathscr P=2y$,
these inequalities give $\mathscr P,\mathscr Q\ge2m$.
The reconstructed parameters reproduce the seed charge formulas.

The same factorization gives
\begin{equation}
 x^2-4\mathsf q^2=\frac{x(U+V)}{x+y},\qquad
 y^2-4\mathsf p^2=\frac{y(U+V)}{x+y}.
 \label{eq:rasheed-positive-charge-factors}
\end{equation}
Both factors are positive. Their product satisfies
\begin{equation}
 \mathcal F+A=(x^2-4\mathsf q^2)(y^2-4\mathsf p^2)
 =\frac{U(U+V)^2}{4M^2}.
 \label{eq:rasheed-charge-product}
\end{equation}
This is the denominator used in the reconstruction of $b$.
In seed parameters,
\begin{equation}
 \sqrt{\mathcal F+A}
 =\frac{\sqrt{\mathscr P\mathscr Q}
 (\mathscr P\mathscr Q+4m^2)}{4(\mathscr P+\mathscr Q)}.
 \label{eq:rasheed-angular-reconstruction}
\end{equation}
The seed angular momentum is therefore
$J=b\sqrt{\mathcal F+A}/m$, and the signed rotation parameter is
\begin{equation}
 b=\frac{J\sqrt V}{\sqrt{\mathcal F+A}}.
 \label{eq:rasheed-signed-b}
\end{equation}
In particular, its sign is fixed by $J$.

For a regular nonextremal seed, $d=\sqrt{m^2-b^2}>0$, and
\begin{equation}
 \begin{aligned}
 \sqrt{\mathcal F}&=\frac{m\sqrt{\mathscr P\mathscr Q}}{2},\\
 \sqrt{\mathcal F+A-J^2}
 &=\frac{\sqrt{\mathscr P\mathscr Q}
 (\mathscr P\mathscr Q+4m^2)d}{4m(\mathscr P+\mathscr Q)}.
 \end{aligned}
 \label{eq:rasheed-seed-entropy-radicals}
\end{equation}
Adding these expressions reproduces $S/(2\pi)$ from the seed
entropy formula. This verifies Eq.~\eqref{eq:rasheed-intrinsic-entropy}
directly, as well as its agreement with the STU entropy expression
cited in the main text.

Solving the entropy equation for its two square roots gives
\begin{equation}
 \sqrt{\mathcal F}=\frac{L^2+J^2-A}{2L},\quad
 \sqrt{\mathcal F+A-J^2}=\frac{L^2-J^2+A}{2L}.
 \label{eq:rasheed-radical-inversion}
\end{equation}
For $L>0$, both expressions are positive exactly when
$L^2>|J^2-A|$. Squaring the first gives
Eq.~\eqref{eq:rasheed-intrinsic-mass}. Its right-hand side is
positive on this domain, so the strictly increasing function
$\mathcal F(M)$ selects exactly one $M>M_0$.
For this mass, the reconstructed rotation satisfies
\begin{equation}
 m^2-b^2
 =\frac{V(\mathcal F+A-J^2)}{\mathcal F+A}>0.
 \label{eq:rasheed-reconstructed-horizon-gap}
\end{equation}
The reconstruction therefore gives $|b|<m$, as well as
$\mathscr P,\mathscr Q\ge2m$. Conversely, every regular seed
in this nonextremal domain has two positive entropy square roots
and satisfies the stated inequality. This proves that the mass
equation covers precisely the specified nonextremal outer branch.

We conclude by examining the lower-mass endpoints employed in the monotonicity argument. If both seed charges are nonzero,
the rational equation still has a unique interior root at $M=M_0$.
The charge factorization then gives $V=0$, while $U$ remains
finite and positive. Hence $\mathcal F=UV\to0$ along the open
physical branch.

On a one-charge axis, the physical scalar roots can be written
explicitly as
\begin{equation}
 \begin{aligned}
 \sigma&=\frac{-M+\sqrt{M^2+8\mathsf q^2}}{2}
 &&(\mathsf p=0,\ \mathsf q\ne0),\\
 \sigma&=\frac{M-\sqrt{M^2+8\mathsf p^2}}{2}
 &&(\mathsf q=0,\ \mathsf p\ne0).
 \end{aligned}
 \label{eq:rasheed-one-charge-scalar-roots}
\end{equation}
As $M\downarrow M_0=|\mathsf q|$ on the electric axis,
$\sigma\to M_0$ and $y\to0$; the charge factorization gives
$V=y^2\to0$. On the magnetic axis,
$M_0=|\mathsf p|$, $\sigma\to-M_0$, and $x\to0$, with
$V=x^2\to0$. In both cases $\mathcal F\to0$.
The endpoint rational expression contains a $0/0$ term and must
be interpreted as this limit of the physical branch, not as
an additional root of the denominator-cleared cubic.

At zero seed charge, the solution is
\begin{equation}
 \sigma=0,\quad M_0=0,\quad U=V=M^2,\quad\mathcal F=M^4.
 \label{eq:rasheed-neutral-mass-limit}
\end{equation}
The mass equation reduces to the Kerr relation
\begin{equation}
 M^2=\frac{L^2+J^2}{2L},\qquad L>|J|.
 \label{eq:rasheed-neutral-Kerr-mass}
\end{equation}
In particular, $\mathcal F\to0$ as $M\downarrow0$.
Together, these limits complete the proof that $\mathcal F(M)$
maps $(M_0,\infty)$ onto $(0,\infty)$ for every fixed seed
charge pair.

Finally, the lower endpoint $M=M_0$ must not be identified
with every extremal limit. The two finite-area boundaries
of the entropy relation are
\begin{equation}
 \begin{aligned}
 L^2&=J^2-A>0,\qquad \mathcal F=L^2>0
 &&\text{(fast)},\\
 L^2&=A-J^2>0,\qquad \mathcal F=0
 &&\text{(slow)}.
 \end{aligned}
 \label{eq:rasheed-mass-boundary-types}
\end{equation}
On the fast boundary, $|b|=m>0$ and $M>M_0$.
On the slow boundary, both seed charges are nonzero,
$M\downarrow M_0$, $m\to0$, and $b/m\to J/\sqrt A$ with
$|J|<\sqrt A$. The closures of the two finite-area branches meet at
$L=0$ and $J^2=A$, where the horizon area vanishes; this
zero-area locus is excluded. These limits do not enlarge the nonextremal
domain proved above.

\clearpage
\bibliography{refs_verified}

\end{document}